\documentclass[11pt]{article}
\setlength{\oddsidemargin}{0cm}
\setlength{\evensidemargin}{0cm}
\setlength{\textwidth}{16cm}

\addtolength{\topmargin}{-2cm}
\addtolength{\textheight}{3cm}

\usepackage{amsmath,amsthm,amsfonts,amssymb, graphicx,mathabx,epsfig}
\usepackage{bbm}
\usepackage{bm}
\usepackage{authblk}
\usepackage{setspace}
\usepackage[usenames]{color}
\usepackage[active]{srcltx}

\usepackage{graphicx,color}
\usepackage{footmisc}
\usepackage{amsmath}
\usepackage{amsfonts}
\usepackage{amssymb}
\usepackage{bbm}
\usepackage{epstopdf}
\usepackage{color}
\usepackage{dsfont}
\numberwithin{equation}{section}
\usepackage{tikz}
\usepackage{pgfplots}
\usepackage{subfig}
\usetikzlibrary{fit}
\usetikzlibrary{shapes.geometric}
\usetikzlibrary{decorations.pathmorphing}
\usepackage{slashed}
\usepackage{color}



\renewcommand{\d}{\mathrm{d}}

\newcommand{\e}{\varepsilon}
\newcommand{\dbar}{\kern-.1em{\raise.8ex\hbox{ -}}\kern-.6em{d}}

\def\?{\marginpar{not sure}}

\newcommand{\comment}[1]{}

\setcounter{secnumdepth}{2}
\newtheorem{thm}{Theorem}[section]

\newtheorem{lemma}[thm]{Lemma}
\newtheorem{prop}[thm]{Proposition}


\def \be{\begin{equation}}
\def \ee{\end{equation}}
\def \ben{\begin{equation*}}
\def \een{\end{equation*}}
\def \bea{\begin{eqnarray}}
\def \eea{\end{eqnarray}}

\def\qed{\hfill\raise1pt\hbox{\vrule height5pt width5pt depth0pt}}
\def\nn{\nonumber}

\def\Tr{\mathrm{Tr}}


\def\L{\Lambda}
\def\l{\lambda}
\def\r{\rho}
\def\s{\sigma}
\def\a{\alpha}
\def\b{\beta}
\def\d{\delta}
\def\D{\Delta}
\def\m{\mu}
\def\g{\gamma}

\def\e{\varepsilon}
\def\t{\tau}
\def\th{\theta}


\definecolor{light}{gray}{.75}


\let\a=\alpha \let\b=\beta  \let\g=\gamma  \let\d=\delta \let\e=\varepsilon
     \let\th=\theta  \let\l=\lambda
\let\m=\mu    \let\n=\nu    \let\x=\xi     \let\p=\pi    \let\r=\rho
\let\s=\sigma \let\t=\tau    \let\c=\chi
   \let\o=\omega
 \let\D=\Delta  \let\L=\Lambda 
         
\let\O=\Omega

\newcommand{\VV}{{\mathcal V}}

\newcommand{\qq}{{\bf q}}
\newcommand{\pp}{{\bf p}}

\newcommand{\xx}{{\bf x}}
\newcommand{\yy}{{\bf y}}

\newcommand{\kk}{{\bf k}}

\newcommand{\RR}{{\mathcal R}}

\newcommand{\WW}{{\mathcal W}}
\newcommand{\LL}{{\mathcal L}}

\def\nn{\nonumber}


\def\\{\hfill\break}
\def\={:=}

\let\0=\noindent

\def\sign{{\rm sign}}

\def\tende#1{\,\vtop{\ialign{##\crcr\rightarrowfill\crcr\noalign{\kern-1pt
    \nointerlineskip} \hskip3.pt${\scriptstyle #1}$\hskip3.pt\crcr}}\,}
\def\otto{\,{\kern-1.truept\leftarrow\kern-5.truept\to\kern-1.truept}\,}

\def\to{\rightarrow}

\def\qed{\hfill\raise1pt\hbox{\vrule height5pt width5pt depth0pt}}

\def\V#1{{\bf#1}}
\def\be{\begin{equation}}
\def\ee{\end{equation}}
\def\bea{\begin{eqnarray}}
\def\eea{\end{eqnarray}}
\def\nn{\nonumber}

\def\Tr{\mathrm{Tr}}

\theoremstyle{plain}

\theoremstyle{definition}

\setcounter{secnumdepth}{5}

\begin{document}

\title{Quantization of the interacting Hall conductivity\\ in the critical regime}

\author[1]{Alessandro Giuliani}
\affil[1]{University of Roma Tre, Department of Mathematics and Physics, L.go S. L. Murialdo 1, 00146 Roma, Italy}
\author[2]{Vieri Mastropietro}
\affil[2]{University of Milano, Department of Mathematics ``F. Enriquez'', Via C. Saldini 50, 20133 Milano, Italy}
\author[3]{Marcello Porta}
\affil[3]{University of T\"ubingen, Department of Mathematics, Auf der Morgenstelle 10, 72076 T\"ubingen, Germany}

\maketitle

\begin{abstract} 
The Haldane model is a paradigmatic $2d$ lattice model exhibiting the integer quantum Hall effect. We consider an interacting version of the model, and prove that for short-range interactions, smaller than the bandwidth, 
the Hall conductivity is quantized, for all the values of the parameters outside two critical curves, across which the model undergoes a `topological' phase transition: the Hall coefficient 
remains integer and constant as long as we continuously deform the parameters without crossing the curves; when this happens, the Hall coefficient jumps abruptly to a different 
integer. Previous works were limited to the perturbative regime, in which the interaction is much smaller than the bare gap, so they were restricted 
to regions far from the critical lines. The non-renormalization of the Hall conductivity arises as a 
consequence of lattice conservation laws and of the regularity properties of the current-current correlations. Our method provides a full construction of the critical curves, which 
are modified (`dressed') by the electron-electron interaction. 
The shift of the transition curves manifests itself via apparent infrared divergences in the naive perturbative series, which we resolve
via renormalization group methods.
\end{abstract}
\maketitle

\section{Introduction}

One of the remarkable features of the Integer Quantum Hall Effect (QHE) is the impressive precision of the quantization of the plateaus observed 
in the experiments. While the experimental samples have a very complex microscopic structure, depending on a huge number 
of non-universal details related to molecular forces and the atomic structure, the conductance
appears to be quantized at a very high precision, and the result only depends on fundamental constants.  
The understanding of this phenomenon, via a connection between the Hall conductivity and a 
topological invariant \cite{ASS1,TKNN} was a major success of theoretical condensed matter in the 80s. The argument was later generalized to non-interacting disordered systems
\cite{AG,ASS2,B,BES} and to clean multi-particle systems \cite{AS,N}: however, the definition of conductivity in the interacting case required the presence of an unphysical averaging over fluxes, expected to be unimportant 
in the thermodynamic limit, but a proof remained elusive for many years. 
Arguments based on Ward Identities for Quantum ElectroDynamics in $(2+1)$-dimensions
\cite{CH, H1talk, IM}, or on the properties of anomalies \cite{F}, offered an alternative view on the QHE: they indicated that quantization should persists 
in the presence of many body interaction, but such conclusions were based on manipulations of divergent series, or of effective actions arising in a formal scaling limit. 

The problem of a mathematical proof of the quantization of the Hall conductivity in the presence of many-body interactions remained open for several years. After the works
\cite{AG,AS,ASS2,B,BES,N}, it was dormant for more than a decade, and then, in recent years, it was actively reconsidered again.
The Hall conductivity of a class of interacting fermionic systems was finally computed, and shown to be quantized, in \cite{GMPhall}. 
The fermionic Hamiltonians considered in \cite{GMPhall} have the form $H_0+\l V$,
where: $H_0$ is quadratic and gapped, with the chemical potential in the middle of a spectral gap of width $\D_0$; $V$ is a many body interaction, and $\l$ is its strength. 
The Hall coefficient is written as a power series in $\l$, which is well known to be convergent \cite{C, GK, Br, BK, L, AR}, 
provided that $|\l|\ll\D_0$. Ward Identities
ensure the cancellation of the possible interaction corrections to the conductivity, thus proving that the Hall coefficient is equal to the reference non-interacting value, which is well known to be quantized.
A similar result was also obtained by different methods in \cite{HM,H}, by combining a theorem guaranteeing the quantization of the Hall conductance under the {\it assumption} of a volume-independent spectral gap, see \cite{HM}, with the proof of persistence of the spectral gap in fermionic systems of the form
$H_0+\l V$, with $|\l|\ll\D_0$, see \cite{H}. See also \cite{BBdRF} and \cite{dRS}.

The above results leave the problem of computing the Hall conductivity in the opposite, and more interesting, regime of interactions larger than the reference non-interacting gap
completely open: if $|\l|\gg\D_0$, approaches based on naive perturbation theory break down.
Based on the ideas of quasi-adiabatic evolution of the ground state \cite{HM}, one knows that if the interaction does not close the spectral gap above the ground state, then the Hall coefficient remains the same as the one of $H_0$, even in this case. However, the verification
that the gap persist is much harder in this non-perturbarive regime.

For definiteness, we investigate this problem in the case that $H_0$ is the Haldane model \cite{H1}, which is a paradigmatic $2d$ lattice model exhibiting the integer quantum Hall effect, but our approach applies to more general situations. 
The model is characterized by distinct gapped topological phases, separated by critical lines along which the gap closes. We consider an interacting version of the Haldane model, described by a Hamiltonian $H_0+\l V$; previous results, see  \cite{GMPhall} and   \cite{HM,H}, guarantee that the Hall coefficient is independent of the interaction only far from the critical lines, well inside the insulating regions.
In this paper, we compute the Hall conductivity for all the values of the parameters outside two critical curves, across which the model undergoes a `topological' phase transition: the Hall coefficient 
remains integer and constant as long as we continuously deform the parameters without crossing the curves; when this happens, the Hall coefficient jumps abruptly to a different 
integer. The main difficulties are related to the fact that the critical lines are non-universal and interaction-dependent, thus making a naive perurbative approach unreliable. This is
analogous to what happens in the theory of critical phenomena, where the critical temperature is modified by the interaction, and one needs to tune the temperature as the interaction is 
switched on, in order to stay at criticality. Technically, we proceed in a similar way; we do not expand around the non interacting Hamiltonian but around a reference quadratic Hamiltonian, characterized by the same gap as the interacting system, whose value is fixed self-consistently. It is an interesting question to see whether a similar strategy can be applied within the Hamiltonian scheme of \cite{H}.

Our results extend and complement those of \cite{GJMP}, where we considered the same model (in the special case of ultra-local interactions) and we proved: (i) 
existence of the critical curves, but without an explicit control on their regularity properties, and (ii) universality of the jump in the Hall coefficient across the critical curves, but without a 
proof that the coefficient remains constant in each connected component of the complement of the  critical curves. Combining the results of \cite{GJMP} with those presented here, 
we have a complete construction of the topological phase diagram of the interacting Haldane model. We stress that the choice of this specific model is made only for the sake of 
definiteness. The proof applies to any short-range, translation invariant lattice fermionic Hamiltonian, whose bare energy bands display conical intersections at criticality.

Our presentation is organized as follows: in Section \ref{sec:main} we define the class of interacting Haldane models that we are going to consider, and we state our main result. In Section \ref{sec:proof} we prove the quantization of 
the Hall coefficient, under suitable regularity assumptions on the Euclidean correlation functions of the interacting model. 
This part of the paper holds in great generality, for a class of interacting fermionic systems much larger than the interacting Haldane model. 
In Section \ref{sec.4a} we prove the regularity assumptions on the correlations for the model at hand, via rigorous renormalization group methods. In Section \ref{sec.5} we put things together and complete the proof of our main 
result. 

\section{Main result}\label{sec:main}
\subsection{The model}\label{sec.2.1}
The Haldane model describes fermions on the honeycomb lattice hopping on nearest and next-to-nearest neighbours, in the presence of a transverse magnetic field, with {\it zero net flux} through the hexagonal cell, and of a staggered potential. 

Let $\L=\big\{\vec x \mid \vec x = n_{1} \vec \ell_{1} + n_{2} \vec \ell_{2},\; n_{i} \in \mathbb{Z}\}\subset\mathbb{R}^{2}$ be the infinite triangular lattice generated by the 
two basis vectors $\vec \ell_{1} = \frac{1}{2}(3,\, -\sqrt{3}), \vec \ell_{2} = \frac{1}{2}(3, \sqrt{3})$. Given $L\in \mathbb{N}$, we also let $\L_{L}=\L/ L\L$ be the corresponding finite periodic triangular lattice of side $L$, which will be identified with the set
$\L_L= \big\{ \vec x \mid \vec x = n_{1} \vec \ell_{1} + n_{2} \vec \ell_{2},\; n_{i} \in \mathbb{Z}\cap[0,L) \big\}$ 
with periodic boundary conditions. The lattice is endowed with the Euclidean distance on the torus, denoted by $| \vec x - \vec y|_L=\min_{m\in\mathbb Z^2}| \vec x - \vec y+m_1 \vec \ell_1 L+m_2 \vec \ell_2 L|$.
The number of sites of $\Lambda_{L}$ is $|\L_{L}| = L^{2}$. The periodic honeycomb lattice can be realized as the superposition of two periodic triangular sublattices $\L^{\text{A}}_{L} \equiv \L_{L}$, $\L^{\text{B}}_{L}\equiv \L_{L} + (1,0)$. Equivalently, we can think the honeycomb lattice as a triangular lattice, with two internal degrees of freedom corresponding to the $A, B$ sublattices.

It is convenient to define the model in second quantization. The one-particle Hilbert space is the set of functions $\mathfrak{h}_{L} = \{ f: \L_{L}\times \{ \uparrow, \downarrow \}\times \{A, B\}\to \mathbb{C}  \} \simeq \mathbb{C}^{L^{2}}\otimes \mathbb{C}^{4}$. We define the fermionic Fock space $\mathcal{F}_{L}$ as:
\be
\mathcal{F}_{L} = \mathbb{C} \oplus \bigoplus_{n = 1}^{4L^{2}} \mathcal{F}_{L}^{(n)}\;,\qquad \mathcal{F}_{L}^{(n)} = \mathfrak{h}_{L}^{\wedge n}\;,
\ee
with $\wedge$ the antisymmetric tensor product. The number $4$ stands for the number of sublattices times the number of spin degrees of freedom. Notice that for fixed $L$, $\mathcal{F}_{L}$ is a finite-dimensional space. For a given site $\vec x\in \L_{L}$, we introduce fermionic creation and annihilation operators $\psi^{\pm}_{\vec x, \r, \s}$, with $\r \in \{A, B\}$ the sublattice label and $\s \in \{\uparrow, \downarrow \}$ the spin label. They satisfy the standard canonical anticommutation relations $\{ \psi^{+}_{\vec x, \r, \s}, \psi^{-}_{\vec y, \r', \s'}\} = \d_{\r,\r'} \d_{\s,\s'} \d_{\vec x,\vec y}$ and $\{ \psi^{+}_{\vec x, \r, \s}, \psi^{+}_{\vec y, \r', \s'}\} = \{ \psi^{-}_{\vec x, \r, \s}, \psi^{-}_{\vec y, \r', \s'}\} = 0$. The operators $\psi^{\pm}_{\vec x,\r,\s}$ are consistent with the periodic boundary conditions on $\L_{L}$, $\psi^{\pm}_{\vec x + n_{1} L + n_{2} L,\r,\s} = \psi^{\pm}_{\vec x,\r,\s}$.

The reciprocal lattice $\L_{L}^{*}$ of $\L_{L}$ is the triangular lattice generated by the basis vectors $\vec G_{1}$, $\vec G_{2}$, such that $\vec G_i\cdot\vec \ell_j=2\pi \d_{i,j}$. Explicitely, 
$\vec G_{1} = \frac{2\pi}{3}(1,\, -\sqrt{3})$, $\vec G_{2} = \frac{2\pi}{3}(1,\, \sqrt{3})$. We define the finite-volume Brillouin zone as $\mathcal{B}_{L} := \Big\{ \vec k \in \mathbb{R}^{2} \mid \vec k = \frac{n_1}L \vec G_{1} + \frac{n_2}L \vec G_{2},\; n_{i}\in \mathbb{Z}\cap[0,L) \Big\}$. We define the Fourier transforms of the fermionic creation and annihilation operators as:
\be
\psi^{\pm}_{\vec x,\r,\s} = \frac{1}{L^{2}} \sum_{\vec k \in \mathcal{B}_{L}} e^{\pm i \vec k\cdot \vec x} \hat \psi_{\vec k, \r,\s} \quad \forall \vec x\in \L_{L} \qquad \Longleftrightarrow \qquad  \hat \psi^{\pm}_{\vec k, \r, \s} = \sum_{\vec x\in \L_{L}} e^{\mp i\vec k\cdot \vec x} \psi_{\vec x, \r, \s}\quad \forall \vec k\in \mathcal{B}_{L}\;. 
\ee
With this definition, $\hat \psi^\pm_{\vec k,\r,\s}$ are periodic over the Brillouin zone, $\hat \psi_{\vec k + m_{1} \vec G_{1} + m_{2} \vec G_{2},\r,\s} = \hat \psi_{\vec k,\r,\s}$. Moreover, the Fourier transforms of the fermionic operators satisfy the anticommutation relations: $\{ \hat\psi^{+}_{\vec k, \r, \s}, \hat\psi^{-}_{\vec k', \r', \s'} \} = L^{2} \d_{\vec k, \vec k'}\d_{\r,\r'}\d_{\s,\s'}$ and $\{ \hat\psi^{+}_{\vec k, \r, \s}, \hat\psi^{+}_{\vec k', \r', \s'} \} = \{ \hat\psi^{-}_{\vec k, \r, \s}, \hat\psi^{-}_{\vec k', \r', \s'} \} = 0$.

The Hamiltonian of the model is: $\mathcal{H} = \mathcal{H}_0 + U\mathcal{V}$, with $\mathcal{H}_0$ the noninteracting Hamiltonian and $\mathcal{V}$ the many-body interaction
of strength $U$. We have:
\bea\label{eq:Haldane}
\mathcal{H}_0 &=& -t_{1} \sum_{\vec x \in \L_{L}}\sum_{\s = \uparrow, \downarrow} [ \psi^{+}_{\vec x, A, \s} \psi^{-}_{\vec x, B, \s} + \psi^{+}_{\vec x, A, \s} \psi^{-}_{\vec x -\vec \ell_{1}, B, \s} + \psi^{+}_{\vec x, A, \s} \psi^{-}_{\vec x - \vec \ell_{2}, B, \s} + \text{h.c.} ] \nn\\
&& - t_{2} \sum_{\vec x \in \L_{L}}\sum_{\substack{\alpha = \pm \\ j=1,2,3}}  \sum_{\s=\uparrow\downarrow} [ e^{i\alpha \phi} \psi^{+}_{\vec x,A,\s}\psi^{-}_{\vec x + \a\vec \g_{j}, A, \s} + e^{-i\alpha\phi}\psi^{+}_{\vec x,B,\s}\psi^{-}_{\vec x + \a\vec \g_{j}, B, \s}  ]\nn\\
&& + W \sum_{\vec x\in \L_{L}}\sum_{\s = \uparrow\downarrow} [n_{\vec x, A, \s} - n_{\vec x, B, \s}] - \mu \sum_{\vec x\in \L_{L}} \sum_{\r = A, B} \sum_{\s = \uparrow\downarrow} n_{\vec x, \r, \s}\;,
\eea
with $\vec \g_{1} = \vec \ell_{1} - \vec \ell_{2}$, $\vec \g_{2} = \vec \ell_{2}$, $\vec \g_{3} = -\vec \ell_{1}$ and $n_{\vec x,\r,\s} = \psi^{+}_{\vec x,\r,\s}\psi^{-}_{\vec x,\r,\s}$. For definiteness, we assume that $t_{1} > 0$ and $t_{2} > 0$.  The 
term proportional to $t_1$ describes nearest neighbor hopping on the hexagonal lattice. The term 
proportional to $t_2$ describes next-to-nearest neighbor hopping,
with the complex phases $e^{\pm i\phi}$ modeling the effect of an external, transverse, magnetic field. The term 
proportional to $W$ describes a staggered potential, favoring the occupancy of the $A$ or $B$ 
sublattice, depending on whether $W$ is negative or positive. Finally, the term proportional to $\mu$ is the chemical potential, which controls the average particle
density in the Gibbs state. See Fig. \ref{fig:haldane}.
\begin{figure}[hbtp]
\centering
\def\svgwidth{160pt}
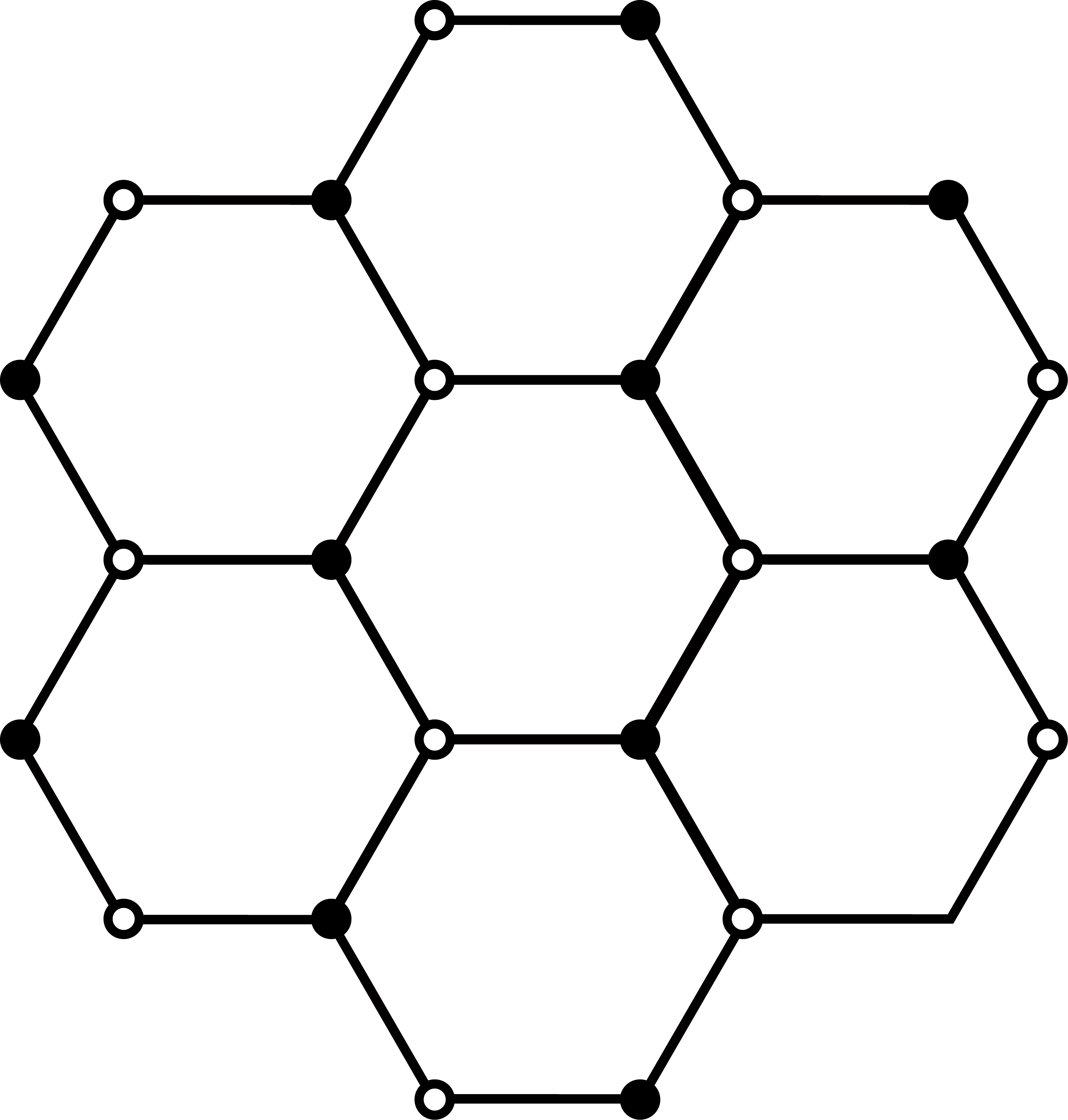
\caption{The honeycomb lattice of the Haldane model. The empty dots belong to $\L^{\text{A}}_{L}$, while the black dots belong to $\L^{\text{B}}_{L}$. The oval encircles the two sites of the fundamental cell, labeled by the position of the empty dot, i.e., of the site of the $A$ sublattice. The nearest neighbor 
vectors $\vec \d_i$, are shown explicitly, together with the next-to-nearest neighbor vectors $\vec \g_i$, and the two basis vectors $\vec \ell_{1,2}$ of $\L_L$.
}\label{fig:haldane}
\end{figure}
 Concerning the many-body interaction, we assume it to be a density-density interaction of the form:
\be\label{eq:V}
\mathcal{V} = \sum_{\vec x,\vec y\in \L_{L}}\sum_{\r = A, B} (n_{\vec x, \r}-1) v_{\r\r'}(\vec x-\vec y) (n_{\vec y,\r'}-1)\;,
\ee
where $n_{\vec x, \r}=\sum_{\s=\uparrow,\downarrow}n_{\vec x,\r,\s}$ and $v_{\r\r'}(\vec x)=v\big(\vec x+(\d_{\r,B}-\d_{\r',B})(1,0)\big)$, 
with $v$ a finite range, rotationally invariant, potential. 

The noninteracting Hamiltonian can be rewritten as:
\be
\mathcal{H}_0= \sum_{\vec x,\vec y} \sum_{\r, \r', \s} \psi^{+}_{\vec x, \r, \s} H_{\r\r'}(\vec x- \vec y) \psi^{-}_{\vec y, \r', \s}\;,
\ee
where $H_{\r\r'}(\vec x- \vec y)$ are the matrix elements of the Haldane model; we denote by $H(\vec x- \vec y)$ the corresponding $2\times2$ block. 
We introduce the Bloch Hamiltonian $\hat H(k) = \sum_{\vec z\in \L_{L}} e^{-i\vec k\cdot \vec z} H(\vec z)$, with $\vec k \in \mathcal{B}_{L}$. An explicit computation gives:
\be
\hat H( \vec k) = \begin{pmatrix} -2t_2\a_1(\vec k)\cos\phi+ m(\vec k)-\m    & -t_{1} \Omega^*( \vec k) \\ - t_{1}\Omega( \vec k) & -2t_2\a_1(\vec k)\cos\phi- m(\vec k)-\m   \end{pmatrix}\ee
where:
\begin{equation}\label{eq:Haldane2}
\begin{split}
&\alpha_{1}(\vec k) = \sum_{j=1}^3\cos(\vec k\cdot\vec \gamma_j)\;,\qquad m(\vec k) = W - 2t_{2}\sin\phi\, \alpha_{2}(\vec k)\;,
\\
&\alpha_{2}(\vec k) = \sum_{j=1}^3\sin(\vec k\cdot\vec \gamma_j)
\;,\qquad \Omega(\vec k) = 1 + e^{-i\vec k\cdot\vec \ell_1} + e^{-i\vec k\cdot\vec\ell_2}\;.
\end{split}
\end{equation}

The corresponding energy bands are 
\begin{equation}
\e_{\pm}(\vec k) =-2t_2\a_1(\vec k)\cos\phi \pm \sqrt{m(\vec k)^{2} + t_1^{2}|\Omega(\vec k)|^{2}}\;.\nn
\end{equation}
To make sure that the energy bands do not overlap, we assume that $t_2/t_1<1/3$. For $L\to \infty$,
the two bands can touch only at the {\it Fermi points} $\vec k_{F}^{\pm} = \big( \frac{2\pi}{3}, \pm\frac{2\pi}{3\sqrt3} \big)$, which are the two zeros of $\Omega(\vec k)$,
around which $\O(\vec k_F^\pm+\vec k')\simeq \frac32(ik_1'\pm k_2')$.
The condition that the two bands touch
at $\vec k_F^\o$, with $\o = +,-$, is that $m_\o=0$, with 
\be
m_{\o} \equiv m(\vec k_{F}^{\o}) = W +\o 3\sqrt{3}\,t_{2}\sin\phi \;.\label{eq.mpm}
\ee
If, instead, $m_+$ and $m_-$ are both different from zero, then the spectrum of $\hat H(\vec k)$ is gapped for all $\vec k$, corresponding to an insulating phase.

\subsection{Lattice currents and linear reponse theory}

Let $n_{\vec x} = \sum_{\r = A,B} \sum_{\s=\uparrow,\downarrow}n_{\vec x, \r, \s}$ be the total density operator at $\vec x$. Its time-evolution is given by $n_{\vec x}(t) = e^{i\mathcal{H} t} n_{\vec x} e^{-i\mathcal{H} t}$, which satisfies the following {\it lattice continuity equation}:
\be\label{eq:cont}
\partial_{t} n_{\vec x}(t) = i[ \mathcal{H}, n_{\vec x}(t) ] \equiv \sum_{\vec y}j_{\vec x,\vec y}(t)\;,
\ee
with $j_{\vec x,\vec y}$ the {\it bond current}:
\be
j_{\vec x,\vec y} =  \sum_{\r,\r'=A,B}\ \sum_{\s=\uparrow,\downarrow} (i\psi^{+}_{\vec y, \r',\s} H_{\r'\r}(\vec y-\vec x) \psi^{-}_{\vec x,\r,\s} + \text{h.c.})\;. 
\ee
Notice that $j_{\vec x,\vec y} = -j_{\vec y,\vec x}$. Thus, using that $H(\vec x) \neq 0$ if and only if $\vec x = \vec 0, \pm \vec \ell_{1},\pm\vec\ell_2, \pm (\vec \ell_{1} - \vec \ell_{2})$, Eq. (\ref{eq:cont}) implies:
\bea
\partial_{t} n_{\vec x}(t) &=& \sum_{\vec y} j_{\vec x,\vec y}(t) = \sum_{i=1,2} [j_{\vec x, \vec x+ \vec \ell_{i}} + j_{\vec x, \vec x - \vec \ell_{i}}] + j_{\vec x, \vec x+ \vec \ell_{1} - \vec \ell_{2}} + j_{\vec x, \vec x - \vec \ell_{1} + \vec \ell_{2}}\nn\\ &\equiv& -\text{d}_{1} \tilde{\text{\j}}_{1, \vec x} - \text{d}_{2} \tilde{\text{\j}}_{2, \vec x}\;,\label{eq.2.11}
\eea
where $\text{d}_{i} f(\vec x) = f(\vec x) - f(\vec x - \vec \ell_{i})$ is the lattice derivative along the $\vec \ell_{i}$ direction, and:
\be\label{eq.jj}
\tilde{\text{\j}}_{1,\vec x} = -j_{\vec x, \vec x+ \vec \ell_{1}} - j_{\vec x, \vec x+ \vec \ell_{1} - \vec \ell_{2}}\;,\qquad \tilde{\text{\j}}_{2,\vec x} = -j_{\vec x, \vec x + \vec \ell_{2}} - j_{\vec x, \vec x - \vec \ell_{1} + \vec \ell_{2}}\;.
\ee
The operators $\tilde{\text{\j}}_{i, \vec x}$ are the components along the $\vec \ell_{i}$ directions of the total vectorial current, defined as
\be \vec \jmath_{\vec x} =  \tilde{\text{\j}}_{1,\vec x} \vec \ell_{1}+  \tilde{\text{\j}}_{2,\vec x}\vec \ell_{2}\;.\label{totcur}\ee
Note that, given the definitions of $\vec\ell_{1,2}$, the components of the lattice current along the two reference, orthogonal, coordinate directions are: 
\be \label{j12} j_{1,\vec x}=\frac32(\tilde{\text{\j}}_{1,\vec x}+\tilde{\text{\j}}_{2,\vec x}),\qquad j_{2,\vec x}=\frac{\sqrt3}2(-\tilde{\text{\j}}_{1,\vec x}+\tilde{\text{\j}}_{2,\vec x}).\ee

We are interested in the transport properties of the Haldane-Hubbard model, in the linear response regime. The {\it Gibbs state} of the interacting model is defined as: $\langle \cdot \rangle_{\b, L} = \Tr_{\mathcal{F}_{L}} \cdot e^{-\b \mathcal{H}} / \mathcal{Z}_{\b, L}$ with $\mathcal{Z}_{\b, L} = \Tr_{\mathcal{F}_{L}} e^{-\b \mathcal{H}}$ the partition function. We define the conductivity matrix via the {\it Kubo formula}, for $i, j =1,2$:
\be\label{eq:realkubo}
\s_{ij} := \frac1{|\vec\ell_1\wedge\vec\ell_2|}\lim_{p_0 \to 0^{+}} \frac{1}{p_0}\Big[-i \int_{-\infty}^{0} dt\, e^{p_0 t} \lim_{\beta, L\to \infty} \frac{1}{L^{2}} \langle [ \mathcal{J}_{i}\,, \mathcal{J}_{j}(t) ] \rangle_{\b, L} + i
 \lim_{\beta, L\to \infty} \frac{1}{L^{2}} \langle [\mathcal J_i,\mathcal X_j] \rangle_{\b, L}\Big]\;,
\ee
with $\vec{\mathcal{X}}$ the second quantization of the position operator, and $\vec{\mathcal{J}} = \sum_{\vec x\in\L_L} \vec \jmath_{\vec x}=i[\mathcal H,\vec{\mathcal{X}}]$ 
the total current operator\footnote{\label{ciaciao}There is an issue in defining the position operator on the torus. In order to avoid the problem, we interpret the second term in \eqref{eq:realkubo} as 
being equal to $i\frac{\partial }{\partial q_j}\pmb{\langle} [ \mathcal{J}_{i}(\vec q), N(-\vec q)]\pmb{\rangle}_\infty\big|_{\vec q=\vec 0}$, where: $\vec{\mathcal{J}}(\vec q)=
\sum_{\vec x\in\L_L} \vec j_{\vec x}e^{i\vec q\vec x}$,
$N(\vec q)=\sum_{\vec x\in\L_L} n_{\vec x}e^{i\vec q\vec x}$, $\pmb{\langle}  [ \mathcal{J}_{i}(\vec q), N(-\vec q)]\pmb{\rangle}_\infty:=
\lim_{\b,L\to\infty}\frac1{L^2}\langle [ \mathcal{J}_{i}(\vec q^{(L)}), N(-\vec q^{(L)})]\rangle_{\b,L}$, and $\vec q^{(L)}$ a sequence of vectors in $\mathcal B_L$ such that 
$\lim_{L\to\infty}\vec q^{(L)}=\vec q$.}. This formula describes the linear response of the average current at the time $t= 0$ an adiabatic external field of the form $e^{\eta t} \vec E\cdot \vec{\mathcal{X}}$, see {\it e.g.} \cite{Ale} for a formal derivation, and \cite{BdRF1,BdRF2,MT,T} for a rigorous derivation in a slightly different setting. 


\medskip

{\bf Remark.} The indices $i,j$ labelling the elements of the conductivity matrix \eqref{eq:realkubo} refer to the two reference,  
orthogonal, coordinate directions. Sometimes, a similar definition of the Kubo matrix is given, where, instead, the indices $i,j$ label the two lattice coordinate directions 
$\vec\ell_1,\vec\ell_2$ (`adapted basis'). The two definitions are, of course, related in a simple way, via the transformation induced by the change of basis. In particular, the 
transverse conductivities defined in the orthogonal and in the adapted basis are the same, up to an overall multiplicative factor, equal to $|\vec \ell_1\wedge\vec\ell_2|$. 
The longitudinal conductivities are, instead, related via a matrix relation that mixes up the diagonal and non-diagonal components of the conductivity matrix. For ease of 
comparison with experimental papers on graphene, or graphene-like materials, we prefer to use the 
definition involving the orthogonal reference directions, which we find more natural. 

\medskip

In the absence of interactions, the Kubo conductivity matrix of the Haldane model can be computed explicitly. Suppose that $m_{\o} \neq 0$, both for $\o=+$ and for $\o=-$, 
and let us choose the chemical potential in the spectral gap. For instance, let $\m = -2t_{2} \cos\phi \alpha_{1}(k_{F}^{\o})$, which corresponds to choosing the chemical potential in the `middle of the gap'. Then, it turns out that \cite{H1}:
\be
\s_{11} = 0\;,\qquad \s_{12} = -\s_{21} = \frac{\nu}{2\pi}\;,\qquad \nu = \sign(m_{+}) - \sign(m_{-})\;.
\ee
The integer $\nu$ is the Chern number of the Bloch bundle associated to $\hat H(\vec k)$. The zeros of $m_{\o}=W+\o3\sqrt3 t_2\sin\phi$, with $\o\in\{\pm\}$, 
define the {\it critical curves} of the Haldane model, which separate the different topological phases, corresponding to different values of $\n$. {\it On} the curves, the spectrum is 
gapless: the energy bands intersect with conical intersection, and the system displays a quantization phenomenon of the {\it longitudinal} conductivity:
\be\label{eq:univ11}
\s_{11} = \s_{22} = \frac{1}{8}\;,
\ee
while $\s_{11} =\s_{22}= \frac{1}{4}$ at the `graphene points' $m_{+} = m_{-} = 0$. 

\subsection{Main result: interacting topological phases and phase transitions}\label{sec:mr}

Let us now turn on the many-body interaction, $U\neq 0$. In previous works, it was proved that the quantization of the conductivity persists, but 
only for interactions of strength {\it much smaller than the gap of $\mathcal H_0$}.
Our main result, summarized in the next theorem, overcomes this limitation. 
\begin{thm}\label{thm:1} There exists $U_{0} >0$, independent of $W,\phi$, such that for $|U| < U_0$ the following is true. There exist three functions,
$\d_{\o}(U,W, \phi)$, with $\o=\pm$, and $\xi(U,W, \phi)$, analytic in $U$ and continuously differentiable in $W,\phi$, such that, if the chemical potential is fixed at the value
$\m = -2t_{2} \cos\phi \alpha_{1}(k_{F}^{\o}) + \xi(U,W, \phi)$, then, for all the values of $W,\phi$ such that $m_\o^R(W,\phi):=W+\o3\sqrt3 t_2\sin\phi+\d_\o(U,W,\phi)$ is different from zero, both for 
$\o=+$ and for $\o=-$, the interacting Hall conductivity is 
\be
\s_{12}(U)= \frac{1}{2\pi}\big[ \sign(m^{\text{R}}_{+}) - \sign(m^{\text{R}}_{-}) \big]\;.\label{hh}
\ee
Moreover, the conditions $m_\o^R(W,\phi)=0$, $\o\in\{\pm\}$, define two $C^1$ curves $W=W^R_\o(\phi)$, called `critical curves', 
which are $C^1$ close to the unperturbed curves $W=-\o3\sqrt3 t_2\sin\phi$.
The two critical curves have the same qualitative properties as the unperturbed ones, in the sense that: 
(i) they intersect at $(W,\phi)=(0,0), (0,\pi)$; (ii) they are one the image of the other, under 
the reflection $W\to-W$; (iii) they are monotone for $\phi\in[-\frac\pi2,\frac\pi2]$; (iv) they are odd in $\phi$, and their periodic extension to $\mathbb R$ is 
even under the reflection $\phi\to \pi-\phi$. 
\end{thm}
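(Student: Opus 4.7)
I would work in the Euclidean setting, where the Kubo formula \eqref{eq:realkubo} can be rewritten, after Wick rotation, as the $\vec p\to\vec 0$ limit of a suitable derivative in the spatial momentum $\vec p$ of the imaginary-time current-current Schwinger function $\hat K_{ij}(p_0,\vec p)$ at $p_0=0$. The obstacle to a direct perturbative expansion around $\mathcal H_0$ is that the many-body interaction shifts, in parameter space, the location of the gap-closing points, so that near the critical curves the bare gap may be much smaller than $U$ and naive perturbation theory develops spurious infrared singularities. I would therefore expand around a reference quadratic Hamiltonian $\mathcal H_0^R$ obtained from $\mathcal H_0$ by replacing the bare effective masses $m_\o$ at the Fermi points with dressed masses $m_\o^R=m_\o+\d_\o$ and the chemical potential $\m$ with $\m+\xi$, and treat the triple $(\d_+,\d_-,\xi)$ as counterterms, to be fixed self-consistently.

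\textbf{Renormalization-group construction.} The Gibbs state admits a Grassmann functional-integral representation, on which I would perform a multiscale Wilsonian integration, with infrared scales organized around the two Fermi points $\vec k_F^\pm$ and dressed propagator built from $\mathcal H_0^R$. The flow stops at scale $\min_\o|m_\o^R|$, which is strictly positive away from the critical set. The three counterterms $(\d_+,\d_-,\xi)$ are uniquely determined by requiring the cancellation, at every scale, of the relevant local monomials in the effective action at the Fermi points; this is a non-linear fixed-point equation in a Banach space of $U$-analytic functions, to be solved by contraction. Standard Gram--Hadamard determinantal bounds on the tree expansion then yield convergence for $|U|<U_0$, analyticity in $U$, and $C^1$ regularity in $(W,\phi)$. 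This step supplies the regularity assumptions on the Schwinger functions invoked in Section \ref{sec:proof}, in particular the differentiability of $\hat K_{ij}(p_0,\vec p)$ near $(0,\vec 0)$.

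\textbf{Quantization and critical curves.} With those regularity inputs I would derive, from the lattice continuity equation \eqref{eq:cont}, the Ward identities relating $\hat K_{ij}$ to the density and vertex correlation functions, and combine them with the smoothness of $\hat K_{ij}$ near the origin to show that all interaction corrections to $\s_{12}$ cancel. What remains is the Hall conductivity of the reference quadratic model $\mathcal H_0^R$, which is the Chern number of the dressed Bloch bundle and equals $\tfrac1{2\pi}[\sgn(m_+^R)-\sgn(m_-^R)]$ whenever both $m_\o^R\neq 0$; this is precisely \eqref{hh}. The critical curves $W=W_\o^R(\phi)$ are then obtained by solving $m_\o^R(W,\phi)=0$ via the implicit function theorem, applicable since $\d_\o$ is $C^1$ in $(W,\phi)$ with derivatives of order $U$, and the symmetries (i)--(iv) are inherited from the discrete symmetries of $\mathcal H_0+U\mathcal V$ (particle-hole, reflection $W\to-W$ combined with sublattice exchange, $\phi\to\pi-\phi$) together with the uniqueness of the counterterms.

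\textbf{Main obstacle.} The technical heart of the argument is the RG construction in the regime $|m_\o^R|\ll|U|$, where the Wilsonian flow runs down to arbitrarily low momenta and several effective couplings (wave function renormalization, Fermi velocity, effective mass) flow non-trivially over many scales. I expect the main difficulty to be showing that the flow equations for the relevant couplings close, that the beta function is summable in the running coupling uniformly in the distance to the critical curves, and that the self-consistency equation for $(\d_+,\d_-,\xi)$ admits a solution analytic in $U$ and $C^1$ in $(W,\phi)$ up to the critical set. This combines the short-memory property of the Wilsonian flow with explicit cancellations enforced by the lattice $U(1)$ gauge symmetry, and is precisely what allows the present scheme to succeed where a direct expansion around $\mathcal H_0$ must fail.
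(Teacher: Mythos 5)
Your proposal follows essentially the same route as the paper: expand around a reference quadratic Hamiltonian with dressed masses, run a multiscale RG around the two Fermi points with counterterms fixed by a Banach fixed-point argument, use Ward identities plus momentum-space regularity of the correlations to show the conductivity of $\mathcal H^{\text R}$ is independent of $U$, and finish with the implicit function theorem and the discrete symmetries \eqref{eq.4.6}--\eqref{eq.4.12} for the critical curves.

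Two points deserve a closer look. First, you propose a triple of independent counterterms $(\d_+,\d_-,\x)$, but there are only two local relevant operators available (the staggered density and the total density), so only a pair $(\d,\x)$ can actually be introduced as counterterms in the Hamiltonian; since $W$ shifts $m_+$ and $m_-$ by the same amount, the paper necessarily ends up with $\d_+=\d_-$, the flow at the $\o=+$ valley being controlled not by an extra counterterm but by the fact that it stops earlier (at scale $h^*_1\sim\log_2 m_{\text R,+}$) and by the improved bound $|\xi_{+,h}-\xi_{-,h}|+|\d_{+,h}-\d_{-,h}|\lesssim |U|2^{h^*_1-h}$. Treating $\d_+,\d_-$ as independent would over-determine the fixed-point problem. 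Second, ``Ward identities plus smoothness $\Rightarrow$ all interaction corrections cancel'' is the conclusion rather than the argument: the paper first uses the one-current Ward identity to recast the Euclidean conductivity as $\partial_{p_0}\widehat K^{\text R}_{i,j}(\V 0)$ (Lemma \ref{lem:C1}), then interpolates in the coupling, $\widehat K^{\text R}_{i,j}=\widehat K^{\text R,0}_{i,j}+\int_0^U dU'\,\partial_{U'}\widehat K^{\text R,U'}_{i,j}$, expressing the derivative in terms of three-point functions $\widehat K_{i,j,\sharp}$ with $\sharp\in\{0,3,V\}$, and only then uses the two-current Ward identities together with $C^3$ regularity to show $\partial_{p_0}\widehat K^{\text R,U'}_{i,j,\sharp}((p_0,\vec0),(-p_0,\vec0))\to 0$ as $p_0\to 0$ (Lemma \ref{lem:C2}). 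This interpolation plus the three-point Ward identity is the heart of the cancellation and should be stated explicitly. A minor slip: the Euclidean conductivity is $\frac{1}{|\vec\ell_1\wedge\vec\ell_2|}\partial_{p_0}\widehat K^{\text R}_{i,j}(\V0)$, a $p_0$-derivative at $\vec p=\vec 0$, not a spatial-momentum derivative at $p_0=0$.
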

The main improvement of this result with respect to previous works is that it establishes the quantization of the Hall conductivity for values of the coupling constant $U$ that are {\it much larger} than the gap of the bare Hamiltonian: it states that the interaction does not change the value of the interacting Hall conductivity, provided we do not cross the interacting critical curves, which we construct explicitly; this universality of the Hall coefficient holds, in particular, arbitrarily close to the critical curves. On the critical curves the system is massless, i.e., correlations decay algebraically at large distances, and we do not have informations on the transverse conductivity coefficient. However, the critical 
longitudinal conductivity displays the same quantization phenomenon as the non-interacting one: namely, if $W=W^{R}_\o(\phi)$, for either $\o=+$ or $\o=-$, and $\phi\neq0,\pi$, then
\be\label{eq:univ11c}
\s_{11} = \s_{22} = \frac{1}{8}\;,
\ee
while $\s_{11} =\s_{22}= \frac{1}{4}$ for $(W,\phi)=(0,0),(0,\pi)$; see \cite{GJMP} for the proof. 

Finally, let us stress that we focus on a specific class of many-body perturbations of the Haldane model just for the sake of definiteness: the method of the proof actually 
applies to {\it any} translation-invariant interacting Hamiltonian of the form $\mathcal{H} = \mathcal{H}_0 + U\mathcal{V}$, with: (i) $\mathcal{V}$ a 
short-range, spin-independent, 
interaction, (ii) $|U|$ small compared to the bandwidth, and (iii) 
$\mathcal{H}_0$ a quadratic Hamiltonian that can become gapless as a parameter is varied: in the gapless case, $\mathcal H_0$ has a 
degenerate, point-like, Fermi surface, around which the dispersion relation has a linear, `graphene-like', behavior.

\subsubsection{Strategy of the proof}\label{sec.str}

Let us give an informal summary of the main steps of the proof. For simplicity, we limit ourselves to the generic case $W\neq0$, $\phi\neq 0$,  the special, symmetric, complementary case ($W=0$ and/or $\phi=0$) 
being treatable analogously. Thanks to the symmetries of the model, see Eqs.\eqref{eq.4.6}--\eqref{eq.4.12} below, we further restrict ourselves, without loss of generality, to the range of parameters 
\be \label{eq.parrange} W>0,\qquad  0<\phi\le \frac\pi2,\ee
which corresponds to the case $m_{+}>|m_{-}|$, where $m_\pm$ are defined in \eqref{eq.mpm}. Note that, under these conditions, the amplitude of the bare gap is given by $|m_-|$. 

We expect the interaction to modify (`renormalize') in a non trivial way both the chemical potential and the width of the gap\footnote{Here, by `gap' we mean the rate of the exponential decay of the Euclidean correlations.}. In order to compute the interacting gap, we proceed as follows. For the purpose of this discussion, let us denote by $\mathcal H_0(W,\phi,\mu)$ the non-interacting Hamiltonian \eqref{eq:Haldane}, thought of as a function of the parameters $(W,\phi,\mu)$, at fixed $t_1,t_2$. We rewrite $\mu$ in the form $\mu=-2t_{2}\cos\phi\, \alpha_{1}(k_{F}^{\o})-\xi$, and, recalling that $W=m_-+3\sqrt3 t_2\sin\phi$, we rewrite 
$W=(m_--\d)+3\sqrt3t_2\sin\phi +\d \equiv m_{\text{R},-}+3\sqrt3t_2\sin\phi +\d$, where $\d$ will be chosen in such a way that $m_{\text{R},-}=m_--\d$ has the interpretation of {\it renormalized gap}. 
By using these rewritings, we find:
$$\mathcal H=\mathcal H_0(W,\phi,\mu)+U\mathcal V=\mathcal H_0^{\text{R}}(m_{\text{R},-},\phi)+
U\mathcal V+ \d\sum_{\vec x\in \L_{L}}[n_{\vec x, A} - n_{\vec x, B}] +\x \sum_{\vec x\in \L_L}n_{\vec x},$$
where 
$$\mathcal H_0^{\text{R}}(m_{\text{R},-},\phi):=\mathcal H_0(m_{\text{R},-}+3\sqrt3 t_2\sin\phi, \phi,-2t_{2}\cos\phi\, \alpha_{1}(k_{F}^{\o}))$$
and $n_{\vec x,\rho}=\sum_{\s=\uparrow,\downarrow}n_{\vec x,\rho,\s}$. Let us now introduce the reference Hamiltonian $\mathcal H^{\text{R}}$, thought of as a function of the parameters $U,m_{\text{R},-},\phi$, defined by 
\be
\mathcal{H}^\text{R}: =\mathcal{H}_0^{\text{R}}(m_{\text{R},-},\phi) + U\mathcal{V}+ \d(U,m_{\text{R},-},\phi)\sum_{\vec x\in \L_{L}}[n_{\vec x, A} - n_{\vec x, B}] +\x(U,m_{\text{R},-},\phi) 
\sum_{\vec x\in \L_L}n_{\vec x}.\label{cc}\ee
In general, $\mathcal{H}^\text{R}$ is different from the original Hamiltonian $\mathcal H$. 
However, by construction, $\mathcal H=\mathcal H^{\text{R}}$, provided that 
$\mu=-2t_{2}\cos\phi\, \alpha_{1}(k_{F}^{\o})-\xi(U,m_{\text{R},-},\phi)$, and $m_{\text{R},-}$ is a solution of the fixed point equation
\be
m_{\text{R}, -} =W-3\sqrt3 t_2\sin\phi -\d(U,m_{\text{R},-},\phi)\;.\label{eq.mR-}
\ee
Our construction, described below, will allow us to fix the counterterms $\xi(U,m_{\text{R},-},\phi)$ and $\d(U,m_{\text{R},-},\phi)$ in such a way that they are small, of order $O(U)$, and 
that, as anticipated above, $m_{\text{R},-}$ has the interpretation of renormalized gap: in particular, the condition  
$m_{\text{R},-}\neq0$ implies that the system is massive, that is, correlations decay exponentially at large distances, with decay rate $m_{\text{R},-}$. 

\medskip

Given these definitions, the main steps of the proof are the following. 
\begin{enumerate}
\item We introduce the Euclidean correlations and the Euclidean Hall conductivity, which are formally obtained from the corresponding real-time formulas 
via a `Wick rotation' of the time variable. In Lemma \ref{lem:univ}, 
by differentiating the Ward Identities associated with the continuity equation, 
and by combining the result with the Schwinger-Dyson equation, we show that the Euclidean Hall conductivity of $\mathcal H^{\text R}$ is constant in $U$,
provided that $\x(U,m_{\text{R},-},\phi),\d(U,m_{\text{R},-},\phi)$ are differentiable in $U$ and that the Fourier transform of the Euclidean correlation functions is smooth (i.e., at least of class $C^{3}$)
in the momenta, for any fixed $m_{\text{R},-}\neq0$. 
\item As a second step, we prove the assumptions of Lemma \ref{lem:univ}. More precisely, 
we prove that there exist two functions $\x(U, m_{\text{R}, -}, \phi)$ and $\d(U, m_{\text{R}, -}, \phi)$, analytic in $U$, such that
the Euclidean correlations of the model (\ref{cc}) are analytic in $U$ and, if $m_{\text{R},-}\neq0$, they are exponentially decaying at large space-time distances, with decay rate $m_{\text{R},-}$; in particular, if $m_{\text{R},-} \neq0$, their Fourier transform is smooth in the momenta. 
\item Next, we prove the equivalence between the original model and the model with 
Hamiltonian $\mathcal H^{\text{R}}$, anticipated above. In particular, we prove that $\d$ is differentiable in $m_{\text{R}, -}$, with small (i.e., $O(U)$) derivative;  
therefore, eq.\eqref{eq.mR-} can be solved via the implicit function theorem, thus giving 
\be
m_{\text{R},-} =W-3\sqrt3 t_2\sin\phi + \widetilde \d(U,W,\phi), \label{eq.2.23}\ee
and we show that $|\widetilde\d(U,W,\phi)|\leq C|U| (W+\sin\phi)$. 
The equation for the interacting critical curve has the form: $W=3\sqrt{3}\,t_{2}\sin\phi+\d(U,0,\phi)=(1+O(U))\,3\sqrt3 t_2\sin\phi$. 
\item Finally, once we derived explicit estimates on the decay properties of the Euclidean correlations, we infer the 
identity between the Euclidean and the real-time Kubo conductivity, via \cite[Lemma B.1]{AMP}.
\end{enumerate}

The key technical difference with respect to the strategy in \cite{GMPhall} is the rewriting of the model in terms of the renormalized reference 
Hamiltonian $\mathcal{H}^{\text{R}}_0$: this allows us to take into account the renormalization of the gap and of the chemical potential, which characterizes the 
interacting critical point of the theory.

\section{Lattice conservation laws and universality}\label{sec:proof}

In this section, we show how lattice conservation laws can be used to prove the universality of the Euclidean Kubo conductivity, see step (i) above. 
The main result of this section is summarized in Lemma \ref{lem:univ}. Before getting to this lemma, in Section \ref{sec:Eu} we introduce the Euclidean formalism
and derive the {\it Ward identities}, associated with the lattice continuity equation (\ref{eq:cont}), for the Euclidean correlations. In Sections \ref{sec:cons1} and \ref{sec:cons2} we differentiate and manipulate the Ward identities, under the assumption that the current-current correlations are sufficiently smooth in momentum space, 
thus getting some important identities, summarized in Lemma \ref{lem:C1} and \ref{lem:C2}. Finally, in Section \ref{sec:univ}, we prove Lemma \ref{lem:univ}, 
by combining these identities with the Schwinger-Dyson equation. 

\subsection{Euclidean formalism and Ward identities}\label{sec:Eu}

Given an operator $\mathcal{O}$ on $\mathcal{F}_{L}$ and $t\in [0, \beta)$, we define the imaginary-time evolution generated by the Hamiltonian $\mathcal{H}^\text{R}$, Eq. (\ref{cc}), 
as: $\mathcal{O}_{t} := e^{t \mathcal{H}^{\text{R}}} \mathcal{O} e^{-t\mathcal{H}^{\text{R}}}$. Notice that $\mathcal{O}_{t} \equiv \mathcal{O}(-it)$, with $\mathcal{O}(t)$ the real-time 
evolution generated by $\mathcal{H}^{\text{R}}$. Given $n$ operators $\mathcal{O}_{t_{1}}^{(1)},\ldots, \mathcal{O}_{t_{n}}^{(n)}$ on $\mathcal{F}_{L}$, each of which (i) can be written as 
a polynomial in the time-evolved creation and annihilation operators $\psi^\pm_{(t,\vec x),\rho} = e^{t \mathcal{H}^\text{R}} \psi^\pm_{ \vec x,\rho}e^{-t\mathcal{H}^\text{R}}$, (ii) is 
normal-ordered, and (iii) is either even or odd in $\psi^\pm_{(t,\vec x),\rho}$, we define their time-ordered average, or {\it Euclidean correlation function}, as:
\begin{equation}\label{eq:18}
\langle {\bf T}\, \mathcal{O}^{(1)}_{t_1}\cdots \mathcal{O}^{(n)}_{t_{n}} \rangle_{\beta,L}^{\text{R}} := \frac{\Tr_{\mathcal{F}_{L}} e^{-\beta \mathcal{H}^\text{R}} \mathbf{T} \big\{ \mathcal{O}_{t_{1}}^{(1)}\cdots \mathcal{O}_{t_{n}}^{(n)} \big\}  }{\Tr_{\mathcal{F}_{L}} e^{-\beta \mathcal{H}^{\text{R}}}}
\;,
\end{equation}
where the (linear) operator $\mathbf{T}$ is the fermionic time-ordering, acting
on a product of fermionic operators as:
\begin{equation}
\mathbf{T} \big\{ \psi^{\e_{1}}_{(t_1,\vec x_1),\s_1}\cdots \psi^{\e_{n}}_{(t_n,\vec x_n),\s_n} \big\} = \text{sgn}(\pi) \psi^{\e_{\pi(1)}}_{(t_{\p(1)},\vec x_{\p(1)}),\s_{\p(1)}}\cdots \psi^{\e_{\pi(n)}}_{
(t_{\p(n)},\vec x_{\p(n)}),\s_{\p(n)}}
\;,
\end{equation}
where $\pi$ is a permutation of $\{1,\ldots, n\}$ with signature $\text{sgn}(\pi)$
such that $t_{\pi(1)}\geq \ldots \geq t_{\pi(n)}$. If some operators are evaluated at the same time, 
the ambiguity is solved by normal ordering. 
We also denote the {\it connected} Euclidean correlation function, or cumulant, by $\langle {\bf T}\, \mathcal{O}^{(1)}_{t_1}\,; \mathcal{O}^{(2)}_{t_2}\,; \cdots \,; \mathcal{O}^{(n)}_{t_{n}} 
\rangle_{\beta,L}^{\text{R}}$.

Let $O$ be a self-adjoint operator on $\mathcal{F}_{L}$. We define its time Fourier transform as: $\widehat{\mathcal{O}}_{p_0} = \int_{0}^{\beta} dt\, e^{-ip_0 t}
\mathcal{O}_{t}$ with $p_0 \in \frac{2\pi}{\beta}\mathbb{Z}$ the {\it Matsubara frequencies}. Also, we denote by $\widehat{\mathcal{O}}_{\pp}$, for 
$\pp =  (p_{0}, p_{1}, p_{2})$, the joint space-time Fourier transform of the operator $\mathcal{O}_{(t,\vec x)}$: $$\widehat{\mathcal{O}}_{\pp} = 
\int_{0}^{\beta} dt\, \sum_{\vec x\in \L_{L}} e^{-i\pp\cdot \xx} \mathcal{O}_{\xx},$$ with 
$\xx = (t, x_{1}, x_{2})\equiv (x_{0}, x_{1}, x_{2})$. 

Let $j_{\m, \vec x}$, with $\m\in\{0,1,2\}$, be the three-component operator such that $j_{0,\vec x}:=n_{\vec x}$, while $j_{i,\vec x}$, with $i\in\{1,2\}$, are the 
components of the 
total current along the reference, orthogonal, coordinate directions, see \eqref{j12}. Note that $j_{\m,\vec x}$ is the natural current operator, associated both with $\mathcal H$ and 
with $\mathcal H^{\text{R}}$, because $i[\mathcal H,n_{\vec x}]=i[\mathcal H^{\text{R}},n_{\vec x}]$. 
We define the normalized current-current correlation functions as:
\be\label{jeff}
\widehat{K}_{\m_{1}, \ldots, \m_{n}}^{\b, L; \text{R}}(\pp_{1}, \ldots, \pp_{n-1}) := \frac{1}{\beta L^2} \langle {\bf T}\, \hat{\jmath}_{\m_1, \pp_{1}}\,; \hat{\jmath}_{\m_{2}, \pp_{2}}\,; \cdots \,; \hat{\jmath}_{\m_{n}, -\pp_{1}-\ldots - \pp_{n-1}} \rangle_{\b, L}^{\text{R}}\ee
for $\m_i\in\{0,1,2\}$. We also denote the infinite volume, zero temperature limit of the Euclidean correlations by: $\widehat{K}^{\text{R}}_{\m_{1}, \ldots, \m_{n}}(\pp_{1}, \ldots, 
\pp_{n-1}) := \lim_{\b\to \infty}\lim_{L\to \infty} \widehat{K}_{\m_{1}, \ldots, \m_{n}}^{\b, L; \text{R}}(\pp_{1}, \ldots, \pp_{n-1})$.
The {\it Euclidean conductivity matrix} for $\mathcal H^{\text{R}}$ is
\be\label{eq:sijEu}
\s_{ij}^{\text{E}, \text{R}} :=\frac1{|\vec\ell_1\wedge\vec\ell_2|} \lim_{p_{0}\to 0^{+}}\frac{1}{p_{0}} \Big(- \widehat{K}^{\text{R}}_{i,j}\big((-p_0,\vec 0)\big) 
+i \pmb{\langle}  [ \mathcal{J}_{i}, \mathcal{X}_{j} ] \pmb{\rangle}_{\infty}^{\text{R}}\Big)\;,\ee
where, in the second term, $\pmb{\langle}\cdot \pmb{\rangle}_{\infty}^R:=\lim_{\b,L\to\infty}\frac{1}{L^{2}} \langle \cdot\rangle_{\b, L}^{\text{R}}$, and the 
expression $[ \mathcal{J}_{j}, \mathcal{X}_{i} ]$ must be understood as explained in the footnote \ref{ciaciao} above. 
This definition can be obtained via a formal `Wick rotation' of the time variable, $t\to -it$, starting from the original definition of the Kubo conductivity, \eqref{eq:realkubo}, see, e.g., \cite{Ale}. 
A posteriori, we will see that in our context the two definitions coincide, see Section \ref{sec.5} below. 

The structure correlation functions, and hence the conductivity, is severely constrained by {\it lattice Ward identities}. These are nonperturbative implications of lattice continuity equation Eq.(\ref{eq.2.11}), which we rewrite here in imaginary time:
\be\label{eq:consI}
i\partial_{x_{0}} j_{0,\xx} + \text{div}_{\vec x}\vec \jmath_{\xx}=0\;,
\ee
where we used the notation $\text{div}_{\vec x}\vec \jmath_{\xx}:=\sum_{i=1,2}\text{d}_{i}\tilde{\text{\j}}_{i,\xx}$. 

For instance, consider the current-current correlation function\footnote{The definition in \eqref{eq.to} is only valid for $x_0\neq y_0$. However, the value at $x_0=y_0$ has no influence on the following formulas, in particular on \eqref{eq:WIf}, which is the main goal of the following manipulations.}, 
\be\label{eq.to}
\langle {\bf T}\, j_{0, \xx}\,; j_{\n, \yy} \rangle^{\text{R}}_{\b,L} = \theta(x_{0} - y_{0}) \langle  j_{0, \xx}\,; j_{\n, \yy} \rangle^{\text{R}}_{\b,L} + \theta(y_{0} - x_{0}) \langle j_{\n, \yy}\,; j_{0, \xx} \rangle^{\text{R}}_{\b,L}\;,
\ee
where  $\theta(t)$ is the Heaviside step function and the correlations in the right side are the time-unordered ones (i.e., 
they are defined without the action of the time-ordering operator).
Using the continuity equation Eq. (\ref{eq:consI}):
\bea
i\partial_{x_{0}}\langle {\bf T}\, j_{0, \xx}\,; j_{\n, \yy} \rangle^{\text{R}}_{\b,L} &=& \langle {\bf T}\,  i\partial_{x_{0}}j_{0, \xx}\,; j_{\n, \yy} \rangle^{\text{R}}_{\b,L} + i\langle [ j_{0, \vec x}\, , j_{\n, \vec y} ] \rangle^{\text{R}}_{\b,L} \d(x_{0} - y_{0})\nn\\
&=& -\langle {\bf T}\,  \text{div}_{\vec x} \vec \jmath_{\xx}\,; j_{\n, \yy} \rangle^{\text{R}}_{\b,L} + i\langle [ j_{0, \vec x}\, , j_{\n, \vec y} ] \rangle^{\text{R}}_{\b,L} \d(x_{0} - y_{0})\;.
\label{eq.ccc}\eea
Let us now take the Fourier transform of both sides: integrating by parts w.r.t. $x_0$ and using \eqref{eq.ccc}, we find
\bea
p_{0} \widehat{K}^{\b, L; \text{R}}_{0,\n}(\pp) &=& -\frac{1}{\beta L^2} \int_{0}^{\beta} dx_{0} \int_{0}^{\beta} dy_{0}\,  \sum_{\vec x, \vec y\in\L_L} e^{-ip_{0}(x_{0} - y_{0})}e^{-i\vec p
\cdot  (\vec x-\vec y)} i\partial_{x_{0}}\langle {\bf T}\, j_{0, \xx}\,; j_{\n, \yy} \rangle^{\text{R}}_{\b,L}\nn\\
&=& \sum_{i=1,2} (1 - e^{-i\vec p\cdot\vec\ell_{i}}) \frac{1}{\beta L^{2}} \langle {\bf T}\, \hat{\vec \jmath}_{\pp}\cdot\frac{{\vec G_i}}{2\pi}\,; \hat \jmath_{\n, -\pp} \rangle^{\text{R}}_{\b,L} - i\sum_{\vec x} e^{-i\vec p\cdot \vec x} \langle [ j_{0, \vec x}\, ,  j_{\n, \vec 0} ] \rangle^{\text{R}}_{\b,L}\nn\\
&\equiv& \sum_{i,i'=1,2} (1 - e^{-i\vec p\cdot\vec\ell_i})\frac{(\vec G_i)_{i'}}{2\pi} \widehat{K}^{\b, L; \text{R}}_{i',\n}(\pp) + \widehat S^{\b, L; \text{R}}_{\n}(\pp)\;,
\label{eq:WIf}\eea
where we used that $\tilde{\text{\j}}_{i,\xx}=\vec\jmath_{\xx}\cdot\frac{\vec G_i}{2\pi}$, with $\vec G_i$, $i=1,2$, the vectors of the dual basis, see definition in Section \ref{sec.2.1}.
More generally, denoting $(0,\n_2,\ldots,\n_{n})$ by $(0,\underline\nu)$, one has:
\bea\label{eq:WIgen}
&& p_{1,0} \widehat{K}^{\b, L; \text{R}}_{0, \underline{\n}}(\{\pp_{i}\}_{i=1}^{n-1}) = \sum_{i,i'=1,2} (1 - e^{-i\vec p_{1}\cdot\vec\ell_i}) \frac{(\vec G_i)_{i'}}{2\pi}
\widehat{K}^{\b, L; \text{R}}_{i', \underline{\n}}(\{\pp_{i}\}_{i=1}^{n-1}) + \widehat S^{\b, L; \text{R}}_{\underline\nu}(\{\pp_{i}\}_{i=1}^{n-1})\;,\qquad\phantom{+}\\
&&\widehat S_{\underline{\n}}^{\b, L; \text{R}}(\cdots) := -\frac{i}{\b L^2}\sum_{j=2}^{n}  \langle {\bf T}\,C_{\n_{j}}(\pp_{1}, \pp_{j})\,; \hat \jmath_{\n_{2},\pp_{2}}\,;\ldots\,; \hat \jmath_{\n_{j-1},\pp_{j-1}}\,; \hat \jmath_{\n_{j+1},\pp_{j+1}}\,; \cdots\,; \hat\jmath_{\n_{n},\pp_{n}} \rangle_{\b, L}^{\text{R}}, \nn
\eea
with $C_{\n}(\pp_{1}, \pp_{2}) = \int_{0}^{\beta} dt\, e^{-it (\o_{1} + \o_{2})} [ \hat \jmath_{0,(t, \vec p_{1})}\,, \hat \jmath_{\n, (t, \vec p_{2})}]$ (here, with some 
abuse of notation, we let $\hat\jmath_{\mu,(t,\vec p)}$ be the imaginary-time evolution at time $t$ of $\hat\jmath_{\mu,\vec p}$), 
and with the understanding that $\pp_{n} = -\pp_{1} - \ldots - \pp_{n-1}$. Even more generally, the identity remains valid if some of the current operators $j_{\n_i,\pp_i}$ are replaced by other local operators $\hat{\mathcal O}_{i,\pp_i}$: in this case, of course, the operators  $C_{\nu_i}$ must be modified accordingly. In the following, we will be interested in replacing one of the current operators either by the staggered density 
\be \label{j3} \hat \jmath_{3,\pp}:=n_{\pp,A}-n_{\pp,B}\,,\ee
where $n_{\pp,\rho}$ is the Fourier transform of $n_{(t,\vec x),\rho}:=\sum_\s\psi^+_{(t,\vec x),\rho,\s}\psi^-_{(t,\vec x),\rho,\s}$, 
or by the quartic interaction potential
\be\label{Vp}
\hat {\mathcal{V}}_{\pp} := \int_{0}^{\beta} dx_{0}\, e^{-ip_{0} x_{0}} \sum_{\vec x} e^{-i\vec p\cdot \vec x} \sum_{\vec y, \r, \r'} v_{\r,\r'}(\vec x - \vec y)
\big( (n_{\vec x, \r}-1) (n_{\vec y, \r'}-1)\big)_{x_{0}}\;.\ee

\medskip

As we shall see below, the combination of the identity (\ref{eq:WIgen}) together with the regularity of the correlation functions has remarkable implications on the structure of the correlations.

\subsubsection{Consequences of the Ward identities for $C^1$ correlations}\label{sec:cons1}

Here we start by discussing the consequences of the Ward identities for continuously differentiable correlations.
\begin{lemma}\label{lem:C1} Let $\pp_{\b, L} \in \frac{2\pi}{\beta}\mathbb{Z} \times \frac{2\pi}{L} \mathbb{Z}^{2}$, such that $\lim_{\beta, L\to \infty}\pp_{\b, L} = \pp \in B_{\e}(\V0) := \{\qq\in \mathbb{R}^{2} \mid |\qq|<\e \}$, for some $\e>0$. Suppose that $\widehat K^{\text{R}}_{\m,\n}(\pp) = \lim_{\beta,L\to \infty} \widehat K^{\b, L; \text{R}}_{\m,\n}(\pp_{\b, L})$ 
and $\widehat S^{\text{R}}_{j}(\pp) = \lim_{\beta,L\to \infty} \widehat S^{\b, L; \text{R}}_{j}(\pp_{\b, L})$ exist and that $\widehat K^{\text{R}}_{\m,\n}(\pp), \widehat{S}^{\text{R}}_{j}(\pp)  \in C^{1}(B_{\e}(\V0))$.
Then,
\be\label{jeff1}
\s^{\text{E}, \text{R}}_{ij}= \frac1{|\vec\ell_1\wedge\vec\ell_2|}
\frac{\partial}{\partial p_{0}} \widehat{K}^{\text{R}}_{i,j}({\bf 0})\;.
\ee
\end{lemma}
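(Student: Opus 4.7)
My plan is to Taylor expand $\widehat K^{\text{R}}_{i,j}$ in $p_0$ around the origin using the $C^1$ hypothesis, and then to show, via the Ward identity \eqref{eq:WIf}, that the zeroth-order Taylor coefficient coincides with the commutator term $i\pmb{\langle}[\mathcal J_i,\mathcal X_j]\pmb{\rangle}_\infty^{\text{R}}$ appearing in the definition \eqref{eq:sijEu} of $\sigma^{\text{E},\text{R}}_{ij}$. Indeed, the $C^1$ assumption gives $\widehat K^{\text{R}}_{i,j}((-p_0,\vec 0)) = \widehat K^{\text{R}}_{i,j}(\mathbf{0}) - p_0\,\partial_{p_0}\widehat K^{\text{R}}_{i,j}(\mathbf{0}) + o(p_0)$, so, inserting this into \eqref{eq:sijEu}, the lemma reduces to proving the identity
\begin{equation*}
\widehat K^{\text{R}}_{i,j}(\mathbf{0}) \;=\; i\,\pmb{\langle}[\mathcal J_i,\mathcal X_j]\pmb{\rangle}_\infty^{\text{R}},\qquad (\ast)
\end{equation*}
in the sense of footnote~\ref{ciaciao}.

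To derive $(\ast)$ from the Ward identity, I would differentiate \eqref{eq:WIf} (with $\nu=j\in\{1,2\}$) with respect to a spatial momentum $p_k$, $k\in\{1,2\}$, and evaluate at $\pp=\mathbf{0}$; the $C^1$ regularity of $\widehat K^{\text{R}}$ and $\widehat S^{\text{R}}$ legitimates this operation after passing to the $\beta,L\to\infty$ limit. The left-hand side carries an explicit prefactor $p_0$ that is not acted on by $\partial_{p_k}$, so it vanishes at $\pp=\mathbf{0}$. On the right, $\partial_{p_k}(1-e^{-i\vec p\cdot\vec\ell_i})\big|_{\vec p=\vec 0}=i(\vec\ell_i)_k$, and the dual-basis identity $\sum_{i=1,2}(\vec\ell_i)_k(\vec G_i)_{i'}=2\pi\,\delta_{k,i'}$, which follows directly from $\vec G_i\cdot\vec\ell_{i'}=2\pi\delta_{i,i'}$, collapses the double sum and yields
\begin{equation*}
\widehat K^{\text{R}}_{k,j}(\mathbf{0}) \;=\; i\,\partial_{p_k}\widehat S^{\text{R}}_j(\mathbf{0}).
\end{equation*}

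It remains to recognize $i\partial_{p_k}\widehat S^{\text{R}}_j(\mathbf{0})$ as $i\pmb{\langle}[\mathcal J_j,\mathcal X_k]\pmb{\rangle}_\infty^{\text{R}}$, after which $(\ast)$ follows via the symmetry $\widehat K^{\text{R}}_{i,j}(\mathbf{0})=\widehat K^{\text{R}}_{j,i}(\mathbf{0})$, which is immediate from the bosonic character of the currents and the symmetry of the time-ordered product at coinciding total momentum. From the derivation of \eqref{eq:WIf} one extracts the explicit formula $\widehat S^{\beta,L;\text{R}}_j(\pp)=-i\sum_{\vec u}e^{-i\vec p\cdot\vec u}\langle[j_{0,\vec u},j_{j,\vec 0}]\rangle^{\text{R}}_{\beta,L}$, independent of $p_0$ because the equal-time commutator is stationary. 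Differentiating in $\vec p$ at $\vec p=\vec 0$ brings down a factor $-u_k$; spatial translation invariance then rewrites $\sum_{\vec u}u_k\langle[j_{0,\vec u},j_{j,\vec 0}]\rangle^{\text{R}}_{\beta,L}$ as $\tfrac{1}{L^2}\langle[\mathcal X_k,\mathcal J_j]\rangle^{\text{R}}_{\beta,L}$, modulo a contribution proportional to $\sum_{\vec u}\langle[j_{0,\vec u},j_{j,\vec 0}]\rangle=\langle[N,j_{j,\vec 0}]\rangle$, which vanishes by global charge conservation. Passing to the $\beta,L\to\infty$ limit, this gives $\partial_{p_k}\widehat S^{\text{R}}_j(\mathbf{0})=\pmb{\langle}[\mathcal J_j,\mathcal X_k]\pmb{\rangle}_\infty^{\text{R}}$ in the sense of footnote~\ref{ciaciao}. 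The main delicate point of the argument is precisely this last identification: the absence of a well-defined position operator on the torus forces the interpretation of $[\mathcal X_k,\mathcal J_j]$ via the footnote's prescription as a momentum-derivative of the density-current correlator, and this is exactly the form that $\partial_{p_k}\widehat S^{\text{R}}_j$ assumes.
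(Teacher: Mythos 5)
Your proof is correct and takes essentially the same route as the paper: differentiate the Ward identity \eqref{eq:WIf} in a spatial momentum at $\pp=\V0$, use the dual-basis identity to collapse the sum, identify $\partial_{p_k}\widehat S^{\text{R}}_j(\V0)$ with the commutator term in the footnote's sense, and combine with the symmetry under $i\leftrightarrow j$ to match the two $O(1)$ terms in \eqref{eq:sijEu}. The only cosmetic difference is that the paper invokes the symmetry directly at the level of $\pmb{\langle}[\mathcal J_j,\mathcal X_i]\pmb{\rangle}^{\text{R}}_\infty$, whereas you obtain the same fact from the (equivalent, once the identity $\widehat K^{\text{R}}_{k,j}(\V0)=i\partial_{p_k}\widehat S^{\text{R}}_j(\V0)$ is in hand) symmetry of $\widehat K^{\text{R}}_{i,j}(\V0)$ under index exchange.

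Two small things worth cleaning up, neither of which affects the conclusion. First, watch the sign bookkeeping in the Schwinger-term computation: $\partial_{p_k}\widehat S^{\b,L;\text{R}}_j(\V0)=-i\sum_{\vec u}(-iu_k)\langle[j_{0,\vec u},j_{j,\vec 0}]\rangle^{\text{R}}_{\b,L}=-\sum_{\vec u}u_k\langle[n_{\vec u},j_{j,\vec 0}]\rangle^{\text{R}}_{\b,L}$; the minus sign you drop in the intermediate line is recovered when you pass from $[\mathcal X_k,\mathcal J_j]$ to $[\mathcal J_j,\mathcal X_k]$, so the final identification $\partial_{p_k}\widehat S^{\text{R}}_j(\V0)=\pmb{\langle}[\mathcal J_j,\mathcal X_k]\pmb{\rangle}^{\text{R}}_\infty$ is right, but it is worth making the cancellation explicit. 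Second, the dual-basis identity $\sum_i(\vec\ell_i)_k(\vec G_i)_{i'}=2\pi\d_{k,i'}$ does require the observation that $\{\vec\ell_1,\vec\ell_2\}$ spans $\mathbb R^2$ (so that $\sum_i\vec\ell_i\otimes\vec G_i=2\pi\,\openone$ can be checked on a basis); a one-line remark suffices. The use of the charge-conservation relation $\langle[N,j_{j,\vec 0}]\rangle=0$ to discard the translation-invariance boundary term is a correct and useful detail to spell out.
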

\begin{proof} Consider Eq. (\ref{eq:WIf}) with $\n = j$, in the $\beta, L\to \infty$ limit. We differentiate both sides w.r.t. $p_{i}$, and take the limit $\pp\to \V0$, thus getting (recall that $\vec\ell_i\cdot\vec G_j=2\pi\d_{i,j}$):
\be
0 = i \widehat{K}^{\text{R}}_{i,j}(\V0) + \frac{\partial}{\partial p_{i}} \widehat S^{\text{R}}_{j}(\V0).\ee
Now, recall the definition of $\widehat S^{\b,L;\text{R}}_{j}(\pp)$ from eq.\eqref{eq:WIf}: $\widehat S^{\b,L;\text{R}}_{j}(\V0)=- i\sum_{\vec x} e^{-i\vec p\cdot \vec x} \langle [n_{\vec x}\, ,  j_{j, \vec 0} ] \rangle^{\text{R}}_{\b,L}$, where we also used that $j_{0,\vec x}=n_{\vec x}$. Taking the limit $\b,L\to\infty$ and the derivative with respect to $p_i$, we get 
$ \frac{\partial}{\partial p_{i}} \widehat S^{\text{R}}_{j}(\V0)=-\pmb{\langle} [\mathcal X_{i},\mathcal J_j ]  \pmb{\rangle}^{\text{R}}_{\infty}$, where
$\pmb{\langle}\cdot \pmb{\rangle}_{\infty}^R$ was defined in \eqref{eq:sijEu}, and the 
expression $[ \mathcal{J}_{j}, \mathcal{X}_{i} ]$ must be understood as explained in the footnote \ref{ciaciao} above. In conclusion, 
\be  \widehat{K}^{\text{R}}_{i,j}(\V0) =i\pmb{\langle} [\mathcal J_j, \mathcal X_{i} ]  \pmb{\rangle}^{\text{R}}_{\infty}\;,
\ee
and, if we plug this identity in \eqref{eq:sijEu}, noting that 
$\pmb{\langle} [\mathcal J_j, \mathcal X_{i} ]  \pmb{\rangle}^{\text{R}}_{\infty}$ is even under the exchange $i\otto j$, we obtain 
the desired identity. 
\end{proof}

\subsubsection{Consequences of the Ward identities for $C^{3}$ correlations}\label{sec:cons2}

Next, we discuss some other implications of the Ward identities for $C^3$ three-point correlations of the current operator (twice) with either the staggered density 
$\hat j_{3,\pp}$ (see \eqref{j3}), or the interaction potential (see \eqref{Vp}), defined as
\bea
\widehat{K}^{\beta, L; \text{R}}_{\m, \n, 3}(\pp, \qq) &:=& \frac1{\b L^2}\langle {\bf T}\,\hat \jmath_{\mu, \pp}\,; \hat \jmath_{\nu,\qq}\,; \hat \jmath_{3,-\pp-\qq} {\rangle}_{\beta, L} \nonumber\\
\widehat{K}^{\beta, L; \text{R}}_{\m, \n, V}(\pp, \qq) &:=& \frac1{\b L^2}\langle {\bf T}\,\hat \jmath_{\mu, \pp}\,; \hat \jmath_{\nu,\qq}\,; \hat{\mathcal{V}}_{-\pp-\qq} {\rangle}_{\beta, L} 
\;.
\label{eq.pot}\eea
We also let %
\bea\label{eq:schwaleph}
\widehat{S}^{\beta, L; \text{R}}_{j, 3}(\pp, \qq) &:=& -\frac{i}{\b L^2}\langle C_{j}(\pp, \qq)\,; \hat \jmath_{3,-\pp-\qq}\rangle_{\beta, L}^{\text{R}}\;,\\
\widehat{S}^{\beta, L; \text{R}}_{j, V}(\pp, \qq) &:= &-\frac{i}{\b L^2}\langle C_{j}(\pp, \qq)\,; \hat{\mathcal{V}}_{-\pp-\qq}  {\rangle}_{\beta, L}^{\text{R}}\;.
\eea
be the new Schwinger terms. As usual, we denote by 
$\widehat{K}^{\text{R}}_{\m, \n, \sharp}$, $\widehat{S}^{\text{R}}_{j, \sharp}$ the $\beta, L\to\infty$ limits of $\widehat{K}^{\beta, L; \text{R}}_{\m, \n, \sharp}(\cdots)$, $\widehat{S}^{\beta, L; \text{R}}_{j, \sharp}(\cdots)$, with $\sharp\in\{3,V\}$.
\begin{lemma}\label{lem:C2} Let $\sharp \in\{0,3,V\}$. Suppose that the limiting functions $\widehat{K}^{\text{R}}_{\m, \n, \sharp}(\pp,\qq)$, $\widehat{S}^{\text{R}}_{j, \sharp}(\pp,\qq)$
exist in $B_{\e}(\V0)\times B_\e(\V0)$, and that they are of class $C^{3}$ in this domain. Then:
\be\label{eq:WI3}
\frac{\partial}{\partial p_{0}} \widehat{K}^{\text{R}}_{i, j, \sharp}((p_{0}, \vec 0), (-p_{0}, \vec 0)) = \frac{\partial}{\partial p_{0}}\Big[ 
p_{0}^2 \frac{\partial^{2}}{\partial p_{i} \partial q_{j}} \widehat{K}^{\text{R}}_{0,0, \sharp}((p_{0}, \vec 0), (-p_{0}, \vec 0))\Big]\;.
\ee
In particular, the left side of Eq. (\ref{eq:WI3}) vanishes as $p_{0}\to 0$.
\end{lemma}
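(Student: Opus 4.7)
The plan is to apply the lattice Ward identity \eqref{eq:WIgen} twice, once on each of the two current indices of $\widehat{K}^{\text{R}}_{\mu,\nu,\sharp}$, and then use the assumed $C^{3}$ regularity to differentiate in the spatial momenta $\vec p,\vec q$ and set them to $\vec 0$. Concretely, writing $\mathcal{D}_{i'}(\vec p):=\sum_{i}(1-e^{-i\vec p\cdot\vec\ell_{i}})(\vec G_{i})_{i'}/(2\pi)$, so that $\mathcal{D}_{i'}(\vec 0)=0$ and $\partial_{p^{(i)}}\mathcal{D}_{i'}(\vec 0)=i\delta_{i,i'}$, I first use the WI on the first index of $\widehat{K}^{\text{R}}_{0,0,\sharp}$ to express $p_{0}\widehat{K}^{\text{R}}_{0,0,\sharp}$ in terms of $\sum_{i'}\mathcal{D}_{i'}(\vec p)\widehat{K}^{\text{R}}_{i',0,\sharp}$ plus a Schwinger term $\widehat{S}^{\text{R},(1)}_{0,\sharp}$; then, by the symmetry of the three-point correlation under exchange of the first two operators, I apply the analogous WI on the second index of $\widehat{K}^{\text{R}}_{i',0,\sharp}$. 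Combining the two,
\[
p_{0}q_{0}\widehat{K}^{\text{R}}_{0,0,\sharp}(\mathbf{p},\mathbf{q})=\sum_{i',j'}\mathcal{D}_{i'}(\vec p)\mathcal{D}_{j'}(\vec q)\,\widehat{K}^{\text{R}}_{i',j',\sharp}(\mathbf{p},\mathbf{q})+[\text{Schwinger terms}].
\]

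Next, I differentiate both sides with respect to $\partial_{p^{(i)}}\partial_{q^{(j)}}$ at $\vec p=\vec q=\vec 0$: only the term in which each $\mathcal{D}$ is differentiated exactly once survives in the main piece (because $\mathcal{D}(\vec 0)=0$), so that this contribution equals $-\widehat{K}^{\text{R}}_{i,j,\sharp}((p_{0},\vec 0),(q_{0},\vec 0))$. Setting $q_{0}=-p_{0}$ and rearranging, I obtain
\[
\widehat{K}^{\text{R}}_{i,j,\sharp}((p_{0},\vec 0),(-p_{0},\vec 0))
=p_{0}^{2}\,\frac{\partial^{2}}{\partial p^{(i)}\,\partial q^{(j)}}\widehat{K}^{\text{R}}_{0,0,\sharp}((p_{0},\vec 0),(-p_{0},\vec 0))+\mathcal{R}(p_{0}),
\]
where $\mathcal{R}(p_{0})$ collects the Schwinger-term contributions. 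To finish, I need to prove that $\partial_{p_{0}}\mathcal{R}(p_{0})\equiv 0$; once this is done, differentiating in $p_{0}$ gives \eqref{eq:WI3}, and the final claim that the left side vanishes as $p_{0}\to 0$ follows immediately, since the right side has the form $\partial_{p_{0}}\!\big[p_{0}^{2}A(p_{0})\big]=2p_{0}A(p_{0})+p_{0}^{2}A'(p_{0})$ with $A$ continuously differentiable.

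The main obstacle is Step on $\mathcal{R}$. The Schwinger terms involved are two-point functions of equal-time commutators of the form $C_{\nu}(\mathbf{p}_{1},\mathbf{p}_{2})=\int_{0}^{\beta}dt\,e^{-it(\omega_{1}+\omega_{2})}[\hat{\jmath}_{0,(t,\vec p_{1})},\hat{\jmath}_{\nu,(t,\vec p_{2})}]$, with the remaining operator being $\hat{\jmath}_{\nu',\cdot}$ or $\hat{\mathcal{V}}_{\cdot}$ or $\hat{\jmath}_{3,\cdot}$. The crucial observation is that at $\vec p_{1}=\vec 0$ the operator $\hat{\jmath}_{0,(t,\vec 0)}=\mathcal{N}$ is the total number operator, which is conserved and commutes with each of $\mathcal{J}_{i}$, $\hat{\jmath}_{3,\cdot}$ and $\hat{\mathcal{V}}_{\cdot}$ (the interaction being density--density, and the other two being number-conserving). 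Hence the commutators appearing in the Schwinger terms vanish whenever the `WI momentum' is set to $\vec 0$. A careful expansion of $\mathcal{R}(p_{0})$ in terms of the first $\vec p$- and $\vec q$-derivatives of these Schwinger terms at $\vec p=\vec q=\vec 0$, combined with one more use of the same WI \eqref{eq:WIgen} applied to the resulting two-point functions, then shows that the surviving pieces reorganize into an expression that is constant in $p_{0}$.

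The whole argument uses the $C^{3}$ assumption in an essential way: two derivatives are spent to bring the spatial-momentum differences out of $\mathcal{D}$, and one more derivative in $p_{0}$ is taken in the final step. The technical bookkeeping of Schwinger terms, and the identification of the relevant equal-time commutators with operators that annihilate the total number, is where the bulk of the work lies.
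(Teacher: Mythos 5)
Your overall strategy matches the paper's: apply the Ward identity \eqref{eq:WIgen} twice (once in the first and once in the second current slot of $\widehat{K}^{\text{R}}_{0,0,\sharp}$), take $\partial_{p_i}\partial_{q_j}$ at $\vec p=\vec q=\vec 0$ using $\mathcal{D}(\vec 0)=0$, and then control the Schwinger-term remainder. But the crucial step --- showing that the Schwinger-term contribution is constant in $p_0$ --- is where your argument has a genuine gap.

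The observation you emphasize (that $\hat\jmath_{0,(t,\vec 0)}$ is the total number operator, hence commutes with $\mathcal J_i$, $\hat\jmath_3$ and $\hat{\mathcal V}$) explains why the Schwinger term is absent in the first Ward-identity application \eqref{3.21}, where one only ever encounters commutators $[\hat\jmath_0,\hat\jmath_0]$ and $[\hat\jmath_0,\hat\jmath_\sharp]$ which vanish for \emph{all} momenta. It does not, however, dispose of the Schwinger term surviving from \eqref{3.23}. After the differentiation $\partial_{p_i}\partial_{q_j}$, the contribution that remains is proportional to $\partial_{q_j}\widehat S^{\text{R}}_{i,\sharp}$ evaluated at $\vec p=\vec q=\vec 0$; there the $\partial_{q_j}$-derivative may act on the density $\hat\jmath_{0,(t,\vec q)}$ appearing inside $C_i$, producing the dipole-type operator $\partial_{q_j}\hat\jmath_{0,\vec q}\big|_{\vec q=\vec 0}\propto\sum_{\vec x}x_j n_{\vec x}$, which does \emph{not} commute with the current. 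So the surviving commutator is not the number operator and does not vanish. Your closing claim that ``a careful expansion\dots combined with one more use of the same WI\dots shows that the surviving pieces reorganize into an expression that is constant in $p_0$'' is not demonstrated, and as stated it is exactly the point at issue.

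The mechanism the paper actually uses is simpler and more robust: the Schwinger term $\widehat S^{\text{R}}_{i,\sharp}(\pp,\qq)$ depends on the Matsubara frequencies $(\omega_1,\omega_2)$ only through the combination $\omega_1+\omega_2$, because $C_\nu(\pp_1,\pp_2)=\int_0^\beta dt\,e^{-it(\omega_1+\omega_2)}[\hat\jmath_{0,(t,\vec p_1)},\hat\jmath_{\nu,(t,\vec p_2)}]$ and its correlator with $\hat\jmath_{\sharp,-\pp_1-\pp_2}$ is (by time-translation invariance) a function of $\omega_1+\omega_2$ alone. On the anti-diagonal $\pp=(p_0,\vec 0)$, $\qq=(-p_0,\vec 0)$ one has $\omega_1+\omega_2=0$, so $\partial_{q_j}\widehat S^{\text{R}}_{i,\sharp}((p_0,\vec 0),(-p_0,\vec 0))$ is a constant in $p_0$; taking $\partial_{p_0}$ then kills it. This is the missing step you should supply, replacing your number-operator heuristic and the unproven ``further WI'' claim.
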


\begin{proof} Taking the $\b,L\to\infty$ limit of the Ward Identity \eqref{eq:WIgen} with $\underline\nu=(0,\sharp)$, we find
\be
p_{0} \widehat{K}^{\text{R}}_{0,0, \sharp}(\pp, \qq) = \sum_{i,i'=1,2} (1 - e^{-i\vec p\cdot\vec \ell_{i}})\frac{(\vec G_i)_{i'}}{2\pi} \widehat{K}^{\text{R}}_{i', 0, \sharp}(\pp, \qq)\;.\label{3.21}\ee
Similarly, choosing $\underline\nu=(j,\sharp)$
\be
p_{0} \widehat{K}^{\text{R}}_{0,j, \sharp}(\pp, \qq) = \sum_{i,i'=1,2} (1 - e^{-i\vec p\cdot\vec\ell_{i}})\frac{(\vec G_i)_{i'}}{2\pi} \widehat{K}^{\text{R}}_{i', j, \sharp}(\pp, \qq) + \widehat S^{\text{R}}_{j, \sharp}(\pp,\qq)\;,
\ee
and, exchanging the roles of $\pp$ and $\qq$, we also get
\be
q_{0} \widehat{K}^{\text{R}}_{i,0, \sharp}(\pp, \qq) = \sum_{j,j'=1,2} (1 - e^{-i\vec q\cdot\vec\ell_{j}})\frac{(\vec G_j)_{j'}}{2\pi} \widehat{K}^{\text{R}}_{i, j', \sharp}(\pp, \qq) + 
\widehat S^{\text{R}}_{i, \sharp}(\pp,\qq)\;.\label{3.23}
\ee
Combining \eqref{3.21} with \eqref{3.23}, we find
\bea
q_{0} p_{0} \widehat{K}^{\text{R}}_{0,0, \sharp}(\pp, \qq) =\sum_{i,i' = 1,2}&\Big[& \sum_{j,j' = 1,2}(1 - e^{-i\vec p\cdot\vec\ell_{i}}) (1 - e^{-i\vec q\cdot\vec\ell_{j}})
\frac{(\vec G_i)_{i'}}{2\pi}\frac{(\vec G_j)_{j'}}{2\pi}
\widehat{K}^{\text{R}}_{i', j', \sharp}(\pp, \qq) \nn\\
&+& (1-e^{-i\vec p\cdot\vec \ell_{i}})\frac{(\vec G_i)_{i'}}{2\pi}\widehat S^{\text{R}}_{i', \sharp}(\pp,\qq)\ \ \Big]\;.
\eea
We now derive w.r.t. $p_{i}, q_{j}$, and then set $\pp = -\qq = (p_{0}, \vec 0)$, thus finding:
\be
p_{0}^{2} \frac{\partial^{2}}{\partial p_{i} \partial q_{j}} \widehat{K}^{\text{R}}_{0,0, \sharp}\big((p_0,\vec 0),(-p_0,\vec 0)\big) = \widehat{K}^{\text{R}}_{i, j, \sharp}\big((p_{0}, \vec 0), (-p_{0}, \vec 0)\big) -i  \frac{\partial}{\partial q_{j}}\widehat S^{\text{R}}_{i, \sharp}\big((p_0,\vec 0),(-p_0,\vec 0)\big)\;.
\ee
Finally, notice that $\partial_{q_{j}} S^{\text{R}}_{i, \sharp}\big((p_0,\vec 0),(-p_0,\vec 0)\big)$ is constant in $p_{0}$ (recall the definition of Schwinger term, Eq. (\ref{eq:schwaleph}), and of $C_{j}$, Eq. (\ref{eq:WIgen})). Therefore, after differentiation in $p_{0}$, the final claim follows.
\end{proof}

\subsection{Universality of the Euclidean conductivity matrix}\label{sec:univ}

Here we prove the universality of the Euclidean conductivity matrix, defined in Eq. (\ref{eq:sijEu}). We restrict to the range of parameters \eqref{eq.parrange}, as discussed at the 
beginning of Section \ref{sec.str}. In terms of the renormalized paramters, we restate \eqref{eq.parrange} as
\be \label{eq.parrange1} 0<\phi\le \frac{\pi}2\;,\qquad m_{\text{R},+}>|m_{\text{R},-}|\;,\ee
where 
\be m_{R,+}:=m_{R,-}+6\sqrt{3}\,t_{2}\sin\phi\;.\label{eq.mR+}\ee
A key ingredient in the proof is the following regularity result for the correlation functions.

\begin{prop}\label{assm} There exists $U_{0}>0$ such that, for $|U|<U_{0}$ and for parameters $(\phi,m_{\text{R},-})$ in the range \eqref{eq.parrange1},  
the following is true. There exist functions $\x(U, m_{\text{R}, -}, \phi)$, $\d(U, m_{\text{R}, -}, \phi)$, analytic in $U$ and vanishing at $U=0$, such that the Euclidean correlation 
functions $\widehat K^{\text{R}}_{\m,\n}(\pp)$, $\widehat K^{\text{R}}_{\m, \n, \sharp}(\pp, \qq)$, as well as the Schwinger terms $\widehat{S}^{\text{R}}_{j}(\pp)$, 
$\widehat{S}^{\text{R}}_{j, \sharp}(\pp,\qq)$, with $\sharp\in\{0,3,V\}$, are analytic in $U$; moreover, if $m_{\text{R},-}\neq 0$, 
they are $C^{3}$ in $\pp, \qq\in B_{\e}(\V0)$, uniformly in $U$ and $\phi$.
\end{prop}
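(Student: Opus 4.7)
The plan is to construct the Euclidean correlations of $\mathcal H^{\text{R}}$ via a multiscale Wilsonian renormalization group (RG) analysis of the Grassmann functional integral representation of the generating functional, with the counterterms $\xi$ and $\delta$ chosen self-consistently so as to make the scheme convergent. Writing the partition and generating functions as Grassmann integrals with propagator $(ik_0-\hat H^{\text{R}}(\vec k))^{-1}$, the only singularities in the limit $\beta,L\to\infty$ sit at the two Dirac points $\vec k_F^\pm$; around each of them the dispersion is linear and controlled by the effective masses $m_{\text{R},\omega}$ of \eqref{eq.mR-}--\eqref{eq.mR+}, so that the infrared scaling is that of massive Dirac fermions in $2+1$ dimensions. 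Decomposing the propagator into dyadic momentum shells around $\vec k_F^\pm$, $g=\sum_{h\le 0}g^{(h)}$, the single-scale kernel obeys the standard bound $|g^{(h)}(\xx)|\le C_N 2^{2h}/(1+(2^h|\xx|)^N)$ for $2^h\gtrsim |m_{\text{R},-}|$, and is further suppressed by $e^{-c|m_{\text{R},-}|\,|\xx|}$ for smaller $h$, where the RG is effectively stopped.

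With this setup, the RG proceeds scale by scale, producing a sequence of effective potentials whose kernels I would estimate via the Gallavotti--Nicol\`o tree expansion, combined with Gram--Hadamard determinant bounds on the truncated expectations. In the Dirac-like power counting on $2+1$ dimensions the four-fermion interaction is irrelevant, while the only relevant local operators are the quadratic ones proportional to $\sum_{\vec x} n_{\vec x}$ and $\sum_{\vec x}(n_{\vec x,A}-n_{\vec x,B})$. The counterterms $\xi(U,m_{\text{R},-},\phi)$ and $\delta(U,m_{\text{R},-},\phi)$ are tuned precisely to cancel these two relevant flows at every scale, which is encoded in a fixed-point equation for the pair $(\xi,\delta)$. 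A Banach contraction argument on the space of running couplings, exploiting the smallness of the marginal wave-function and Fermi-velocity renormalizations together with the irrelevance of the quartic coupling, should then yield a unique analytic solution with $\xi,\delta=O(U)$, in the spirit of the approach of \cite{GJMP}.

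Once the counterterm flow is under control, the same tree expansion gives analyticity in $U$ of the kernels of the effective potential and hence of the generating functional $\mathcal W(U;J)$, uniformly in $\beta,L$, for $|U|<U_0$ with $U_0$ independent of $(W,\phi)$; the correlations $\widehat K^{\text{R}}_{\mu,\nu}$, $\widehat K^{\text{R}}_{\mu,\nu,\sharp}$ and the Schwinger terms $\widehat S^{\text{R}}_j$, $\widehat S^{\text{R}}_{j,\sharp}$ inherit this analyticity by construction. Exponential decay of the real-space correlations at rate $|m_{\text{R},-}|$ then follows from the stopping of the RG at scale $h^*$ with $2^{h^*}\sim|m_{\text{R},-}|$. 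For the $C^3$ regularity in $\pp,\qq\in B_\epsilon(\V 0)$, derivatives in the external momenta produce extra factors of the spacetime separations $\xx$, which are absorbed by the exponential decay at the cost of three powers of $|m_{\text{R},-}|^{-1}$; the resulting bound is uniform in $U$ and $\phi$, though not in $m_{\text{R},-}$, as required.

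The main obstacle I anticipate is the coexistence of the two Dirac points with very different effective masses in the regime \eqref{eq.parrange1}: since $|m_{\text{R},+}|\gg |m_{\text{R},-}|$, the RG must be iterated many more scales near $\vec k_F^-$ than near $\vec k_F^+$, and the multiscale decomposition is genuinely anisotropic in momentum space. At intermediate scales $|m_{\text{R},+}|\gg 2^h\gg |m_{\text{R},-}|$ only the modes localized near $\vec k_F^-$ are alive, and the effective theory reduces to that of a single massless Dirac fermion, while near $\vec k_F^+$ one is integrating over a massive, regular propagator whose contributions have to be reabsorbed into redefinitions of the local terms at $\vec k_F^-$. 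Verifying that this decoupling does not spoil the fixed-point equations for $(\xi,\delta)$, and controlling the irrelevant remainders uniformly in the hierarchy $|m_{\text{R},-}|/|m_{\text{R},+}|\le 1$, is in my view the technical heart of the argument, and is presumably the content of Section \ref{sec.4a}.
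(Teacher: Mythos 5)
Your proposal follows essentially the same route as the paper: Grassmann representation, dyadic multiscale decomposition around the two Fermi points, Gallavotti--Nicol\`o tree expansion with Gram--Hadamard bounds, identification of the relevant quadratic operators, a Banach fixed-point argument for $(\xi,\delta)$, and stopping the RG at scale $2^{h_2^*}\sim|m_{\text{R},-}|$ to get fast decay and hence $C^3$ (indeed $C^\infty$) momentum regularity; you also correctly single out the two-mass hierarchy $|m_{\text{R},+}|\ge|m_{\text{R},-}|$ as the main technical complication, which the paper handles by integrating the $\omega=+$ modes in a single step once scale $h_1^*\sim\log_2 m_{\text{R},+}$ is reached and then flowing only the $\omega=-$ running couplings down to $h_2^*$. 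The only inessential discrepancy is that the smooth compactly supported cutoffs give faster-than-any-power (not genuinely exponential) decay of the single-scale propagators, which is of course still sufficient for the stated $C^3$ conclusion.
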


The proof of this proposition is postponed to the next section. Its content, combined with the (consequences of the) Ward identities discussed above, 
immediately implies the universality of the Euclidean conductivity matrix.

\begin{lemma}\label{lem:univ} Under the same assumptions as Proposition \ref{assm}, if $m_{\text{R},-0}\neq 0$, then 
\be\label{eq:s12R}
\s^{\text{E},\text{R}}_{12} = \frac{1}{2\pi} \big[ \sign(m_{\text{R},+}) - \sign(m_{\text{R},-})\big]\;.
\ee 
\end{lemma}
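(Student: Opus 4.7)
The strategy is to show that $\s^{\text{E},\text{R}}_{12}$ is independent of $U$ throughout $|U|<U_0$, and then to evaluate it at $U=0$. By Proposition \ref{assm} we have $\xi(0,m_{\text{R},-},\phi)=\delta(0,m_{\text{R},-},\phi)=0$, hence $\mathcal H^{\text{R}}|_{U=0}=\mathcal H_0^{\text{R}}(m_{\text{R},-},\phi)$, a non-interacting Haldane Hamiltonian whose bare masses at the two Fermi points $\vec k_F^\pm$ coincide with $m_{\text{R},\pm}$ (see \eqref{eq.mR+} and the definition of $\mathcal H_0^{\text{R}}$); the standard non-interacting computation then gives $\s^{\text{E},\text{R}}_{12}|_{U=0}=(2\pi)^{-1}[\sign m_{\text{R},+}-\sign m_{\text{R},-}]$. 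So the task reduces to proving $\partial_U \s^{\text{E},\text{R}}_{12}=0$.

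Thanks to the $C^3$ regularity supplied by Proposition \ref{assm}, the hypotheses of Lemma \ref{lem:C1} are met, so
\begin{equation*}
\s^{\text{E},\text{R}}_{12}=\frac{1}{|\vec\ell_1\wedge\vec\ell_2|}\,\frac{\partial}{\partial p_0}\widehat K^{\text{R}}_{1,2}\big((p_0,\vec 0)\big)\Big|_{p_0=0}.
\end{equation*}
To compute $\partial_U$ of the right side, I would use the imaginary-time Schwinger-Dyson relation: differentiation of the defining trace produces an insertion of $-\int_0^\beta dt\,(\partial_U\mathcal H^{\text{R}})_t$, which in Fourier is a zero space-time momentum insertion of $\hat{\mathcal V}_{\V 0}$, $(\partial_U\xi)\,\hat\jmath_{0,\V 0}$, and $(\partial_U\delta)\,\hat\jmath_{3,\V 0}$, coming respectively from the three pieces of $\partial_U\mathcal H^{\text{R}}=\mathcal V+(\partial_U\delta)\sum_{\vec x}(n_{\vec x,A}-n_{\vec x,B})+(\partial_U\xi)\sum_{\vec x}n_{\vec x}$. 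Taking the $\beta,L\to\infty$ limit (legitimate by the uniform regularity in Proposition \ref{assm}), this yields
\begin{equation*}
\partial_U\widehat K^{\text{R}}_{1,2}(\pp)=-\widehat K^{\text{R}}_{1,2,V}(\pp,-\pp)-(\partial_U\xi)\,\widehat K^{\text{R}}_{1,2,0}(\pp,-\pp)-(\partial_U\delta)\,\widehat K^{\text{R}}_{1,2,3}(\pp,-\pp),
\end{equation*}
where each three-point function is the one in \eqref{eq.pot} with third argument $-\pp-(-\pp)=\V 0$, consistent with the zero-momentum insertions.

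The decisive cancellation now comes from specializing to $\pp=(p_0,\vec 0)$, so that the second argument is $(-p_0,\vec 0)$, which is precisely the configuration in which Lemma \ref{lem:C2} applies. For each $\sharp\in\{V,0,3\}$, the lemma gives
\begin{equation*}
\frac{\partial}{\partial p_0}\widehat K^{\text{R}}_{1,2,\sharp}\big((p_0,\vec 0),(-p_0,\vec 0)\big)\Big|_{p_0=0}=0.
\end{equation*}
Commuting $\partial_U$ with $\partial_{p_0}$ (again justified by Proposition \ref{assm}), we conclude $\partial_U\partial_{p_0}\widehat K^{\text{R}}_{1,2}((p_0,\vec 0))|_{p_0=0}=0$, so that $\s^{\text{E},\text{R}}_{12}$ is constant in $U$; matching with the non-interacting value at $U=0$ gives \eqref{eq:s12R}.

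The main obstacle is really technical rather than structural: one must carefully justify the Schwinger-Dyson derivation, the term-by-term identification of $\partial_U\widehat K^{\text{R}}_{1,2}$ with the three-point functions $\widehat K^{\text{R}}_{1,2,\sharp}$, and the interchange of $\partial_U$, $\partial_{p_0}$ and the thermodynamic/zero-temperature limits. All of these rest on the strong analyticity and uniform $C^3$ smoothness established in Proposition \ref{assm}, whose proof via multiscale renormalization group is the substantive work of the paper. Once that input is in hand, the universality argument for $\s^{\text{E},\text{R}}_{12}$ is short and essentially algebraic, relying only on the Ward-identity consequences packaged in Lemmas \ref{lem:C1} and \ref{lem:C2}.
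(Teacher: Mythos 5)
Your proposal is correct and follows essentially the same route as the paper: reduce $\s^{\text{E},\text{R}}_{12}$ to $\partial_{p_0}\widehat K^{\text{R}}_{1,2}(\V 0)$ via Lemma \ref{lem:C1}, express the $U$-derivative (or, equivalently, the integrated $U'$-derivative in the paper's interpolation formula \eqref{eq:int}) as a sum of the three-point insertions $\widehat K^{\text{R}}_{1,2,\sharp}(\pp,-\pp)$ with $\sharp\in\{V,0,3\}$ coming from $\partial_U\mathcal H^{\text{R}}$, kill those terms at $\pp=(p_0,\vec 0)$, $p_0\to 0$ via Lemma \ref{lem:C2}, and match the resulting $U$-independent value against the explicit non-interacting computation for $\mathcal H_0^{\text{R}}(m_{\text{R},-},\phi)$. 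The paper writes the same cancellation in integral form rather than as $\partial_U(\cdot)=0$, but the content is identical.
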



\begin{proof} ({\it Assuming the validity of Proposition \ref{assm}}). Thanks to Proposition \ref{assm}, we know that the correlation functions $\widehat K^{\text{R}}_{\m,\n}(\pp)$, $\widehat K^{\text{R}}_{\m, \n, \sharp}(\pp, \qq)$, and the Schwinger terms $\widehat{S}^{\text{R}}_{j}(\pp)$, $\widehat{S}^{\text{R}}_{j, \sharp}(\pp,\qq)$, with $\sharp\in\{0,3,V\}$, are $C^{2}$ in $\pp, \qq\in B_{\e}(\V0)$, for $|U|< U_0$. Therefore, we can apply Lemma \ref{lem:C1} and Lemma \ref{lem:C2}. Using Lemma \ref{lem:C1}, we rewrite the Euclidean conductivity matrix as:
\be
\s^{\text{E}, \text{R}}_{ij} =\frac1{|\vec\ell_1\wedge\vec\ell_2|} \frac{\partial}{\partial p_{0}} \widehat{K}^{\text{R}}_{i,j}(\V0)\;.
\ee
Then, we rewrite $\widehat{K}^{\text{R}}_{i,j}$ in terms of the non-interacting current-current correlation associated with $\mathcal H_0^{\text{R}}$, via the 
following {\it interpolation formula}:
\be
\widehat{K}^{\text{R}}_{i,j}(\pp) = \widehat{K}^{\text{R},0}_{i,j}(\pp) + \int_{0}^{U} dU'\, \frac{d}{dU'} \widehat{K}^{\text{R},U'}_{i,j}(\pp)\;.
\ee
where $\widehat{K}^{\text{R},U'}_{i,j}(\pp)$ is the correlation associated with the ($\b,L\to\infty$ limit of the) Gibbs measure with Hamiltonian 
\be
\mathcal{H}^\text{R}_{U'}: =\mathcal{H}_0^{\text{R}} + U'\mathcal{V}+ \d(U',m_{\text{R},-},\phi)\sum_{\vec x\in \L_{L}}[n_{\vec x, A} - n_{\vec x, B}] +
\x(U',m_{\text{R},-},\phi)\sum_{\vec x\in \L_L}n_{\vec x},\label{ccbis}
\ee
cf. with Eq.\eqref{cc}. Computing the derivative in $U'$:
\bea\label{eq:int}
\widehat{K}^{\text{R}}_{i,j}(\pp) &=& \widehat{K}^{\text{R},0}_{i,j}(\pp)- \int_{0}^{U} dU'\, \Big[\frac{\partial\delta}{\partial U'}(U',m_{\text{R},-},\phi)\,  \widehat{K}^{\text{R},U'}_{i,j,3}(\pp,-\pp)\\
&+&  \frac{\partial\xi}{\partial{U'}}(U',m_{\text{R},-},\phi)\, \widehat{K}^{\text{R},U'}_{i,j,0}(\pp,-\pp) 
 + \widehat K^{\text{R}, U'}_{i,j,V}(\pp, -\pp)\Big]\;.\nn
\eea
We now take the derivative w.r.t. $p_{0}$ and take $p_0\to0$. Using Lemma \ref{lem:C2}, we immediately get:
\be
\frac{\partial}{\partial p_{0}}\widehat{K}^{\text{R}}_{i,j}(\V0) =\frac{\partial}{\partial p_{0}}\widehat{K}^{\text{R},0}_{i,j}(\V0)\;,
\ee
that is, $\s^{\text{E},\text{R}}_{ij}= \s^{\text{E}, \text{R}}_{ij}\Big|_{U=0}$ (we recall that $\s^{\text{E}, \text{R}}_{ij}\Big|_{U=0}$ is the non-interacting 
Euclidean conductivity associated with 
the quadratic Hamiltonian $\mathcal H_0^{\text{R}}$ at $m_{\text{R},-}$, which is assumed to be different from zero). 
The final claim, Eq. (\ref{eq:s12R}), follows from a direct computation of the non-interacting conductivity, cf. with \cite[Appendix B, Eq.(B.8)]{GMPhall}.
\end{proof}

\section{Proof of Proposition \ref{assm}}\label{sec.4a}



The proof of Proposition \ref{assm} is a rather standard application of RG methods for fermions (see, e.g., \cite{BM1, GeM, Gi, M1} for reviews). A similar analysis for interacting graphene,
which corresponds to the case $t_{2} = W = 0$, has been discussed in 
\cite{GM, GMP1}, which we refer to for further details. See also \cite{GJMP}, where an application to the Haldane-Hubbard model was discussed. 
The RG construction of the ground-state correlation functions, 
uniformly in the gap, is ultimately made possible by the fact that the many-body interaction, in the critical, massless, case, is {\it irrelevant} in the RG sense.
The only qualitative effect of the interaction, with respect to 
the non-interacting theory, is a finite renormalization of the gap, of the chemical potential, of the Fermi velocity and of the wave function renormalization. 

We recall once more that we restrict the discussion to the range of parameters \eqref{eq.parrange1}. Moreover, we assume that $W$ is not too large, $W\le M_0$, for a pre-fixed constant $M_0$, the case of large $W$ being substantially simpler, and left to the reader (for large $W$, the system is massive and is in a trivial, non-topological, insulating phase, as it follows from the proof of \cite{GMPhall}). Finally, for simplicity, we set $t_1=1$, that is, in we 
set the scale of the bandwidth equal to one. 

\begin{proof} The starting point is the well-known representation of the Euclidean correlation in terms of Grassmann integrals (see, for instance, \cite{GM, GMPhall}). The generating functional of the correlations is denoted by $\mathcal{W}(f,A)$, with $f$ an external Grassmann field coupled to the fermionic fields, and 
$A$ a (five-component) external complex field conjugated to the lattice currents and the quartic interaction. We have:
\be
e^{\WW(f,A)}=\frac{\int P(d\Psi)e^{-V(\Psi)+(\Psi,f)+(J,A)}}{\int P(d\Psi)e^{-V(\Psi)}},\label{eq.11}
\ee
where: $\Psi^\pm_{\xx,\s}$, with $\xx=(x_0,\vec x)\in\mathbb [0,\b)\times \L_L$ and $\s\in\{\uparrow,\downarrow\}$, 
is a two-component Grassmann spinor, whose components will be denoted by $\Psi^\pm_{\xx,\r,\s}$, with $\r=A,B$;
$P(d\Psi)$ is the fermionic Gaussian integration with propagator
\be g_{\s,\s'}(\xx,\yy)=\frac{\d_{\s,\s'}}{\b L^2}\sum_{k_0\in\frac{2\pi}{\b}(\mathbb Z+\frac12)}\ \sum_{\vec k\in\frac{2\pi}{L}\mathbb Z_L^2}e^{-i\kk(\xx-\yy)} \hat g(\kk),\label{prop}\ee
where $\mathbb Z_L=\mathbb Z/L\mathbb Z$ and, letting $$R(\vec k)= -2t_2\cos\phi\big(\a_1(\vec k)-\a_1(\vec k_F^{\pm})\big),\qquad m_{R}(\vec k)=m_{R,-}+2t_2(\a_2(k)-\a_2(k_F^-))\sin\phi,$$
and recalling that we set $t_1=1$, 
\be \hat g(\kk)=\begin{pmatrix} 
&-ik_0+R(\vec k)+ m_R(\vec k)  & - \O^*(\vec k)\\  &- \O(\vec k)
 & -i k_0+R(\vec k)- m_R(\vec k)
\end{pmatrix}^{\!\!\!-1},\nn
\ee
with the understanding that, 
at contact, $g(\xx,\xx)$ should be interpreted as $\lim_{\e\to 0^+}[g(\xx+(\e,\vec 0),\xx)\\+g(\xx-(\e,\vec 0),\xx)]$;
\bea
V(\Psi)&=&\int_{0}^\b dx_0\sum_{\vec x\in\L_L} \Big[U 
\sum_{\vec y\in\L_L}\sum_{\r, \r'=A,B} n_{\xx, \r} v_{\r,\r'}(\vec x - \vec y) n_{(x_0,\vec y), \r'}\\
&+&\d(U,m_{\text{R},-},\phi)(n_{\xx,A}-n_{\xx,B})
+\x(U,m_{\text{R},-},\phi) n_{\xx}
\Big],\nn
\eea
where $n_{\xx,\r}=\sum_{\s=\uparrow,\downarrow}\Psi^+_{\xx,\r,\s}\Psi^-_{\xx,\r,\s}$ is the Grassmann counterpart of the density operator, and
$n_{\xx}=\sum_{\r=A,B}n_{\xx,\r}$; finally, 
\bea && (\Psi,f)=\int_0^\b dx_0\sum_{\vec x\in\L_L}\sum_{\s=\uparrow\downarrow}(\Psi^+_{\xx,\s}f^-_{\xx,\s}+f^+_{\xx,\s}\Psi^-_{\xx,\s}),\nn\\
&& (J,A)=\frac1{\b L^2}\sum_{p_0\in\frac{2\pi}{\b}\mathbb Z}\ \sum_{\vec p\in\frac{2\pi}{L}\mathbb Z^2}\ \sum_{\m=0}^4\hat A_{\pp,\m}\hat J_{\pp,\m},\nn
\eea
where $\hat J_{\pp,\m}=\int_0^\b dx_0 \sum_{\vec x\in \L_L}e^{-i\pp\cdot\xx} J_{\xx,\mu}$ and: $J_{\xx,0}=n_\xx$ is the Grassmann counterpart of the density;
$J_{\xx,1},J_{\xx,2}$ are the Grassmann counterparts of the two components of the lattice current, 
\be J_{\xx,1}=\frac32(\tilde J_{\xx,1}+\tilde J_{\xx,2}),\qquad J_{\xx,2}=\frac{\sqrt3}2(-\tilde J_{\xx,1}+\tilde J_{\xx,2}),\nn\ee
 with
 $$ \tilde J_{\xx,1}= -J_{\vec x, \vec x+ \vec \ell_{1}} - J_{\vec x, \vec x+ \vec \ell_{1} - \vec \ell_{2}},\qquad 
\tilde J_{\xx,2} = -J_{\vec x, \vec x + \vec \ell_{2}} - J_{\vec x, \vec x - \vec \ell_{1} + \vec \ell_{2}},$$
and
\be J_{\vec x,\vec y} =  \sum_{\s=\uparrow,\downarrow} \big[i\Psi^{+}_{\vec y, \s} H(\vec y-\vec x) \Psi^{-}_{\vec x,\s} -i \Psi^{+}_{\vec x, \s} H(\vec x-\vec y) \Psi^{-}_{\vec y,\s}\big]\;; \nn\ee
$J_{\xx,3}=n_{\xx,A}-n_{\xx,B}$ is the Grassmann counterpart of the staggered density; 
$J_{\xx,4}$ is the Grassmann counterpart of the quartic interaction, 
\be J_{\xx,4}= \sum_{\vec y, \r, \r'} n_{\xx, \r}
v_{\r,\r'}(\vec x - \vec y)n_{(x_0,\vec y), \r'}\;.\ee
The derivatives of the generating functional computed at zero external fields equal the Euclidean correlation functions, cf. with, e.g., \cite[Eq.(27),(28)]{GJMP}. 
Needless to say, the Euclidean correlations satisfy non trivial Ward Identities, following from the lattice continuity equation. For an example, cf. with \cite[Eq.(19),(20)]{GJMP}. 

\medskip

In order to compute the generating functional $\mathcal{W}(f,A)$ in Eq. (\ref{eq.11}), we use an expansion in $U$, 
which is convergent uniformly in the volume and temperature, and uniformly close to (and even on) the critical lines $m_{\text{R}, \pm} = 0$. 
Note that, in the parameter range \eqref{eq.parrange1} the propagator $\hat g(\kk)$ is singular only when $m_{\text{R},-}=0$, in which case the singularity is located at 
$\kk_{F}^{-} := (0, \vec k_{F}^{-})$, with $\vec k_{F}^{\pm} = \big( \frac{2\pi}{3}, \pm \frac{2\pi}{3\sqrt{3}} \big)$. Due to this singularity, the Grassmann integral has, a priori, an infrared problem,
which we resolve by a multi-scale re-summation of the corresponding singularities.

The multi-scale computation of the generating function proceeds as follows. 
First of all, we distinguish the ultraviolet modes, corresponding to large values of the Matsubara frequency, from the infrared ones, by introducing two 
compactly supported cut-off functions, $\chi_\pm(\kk)$, supported in the vicinity of the Fermi points $\kk_F^\pm=(0,\vec k_F^\pm)$;
more precisely, we let $\c_\pm(\kk)=\c_0(\kk- \kk_F^\pm)$, where
$\c_0$ is a smooth characteristic function of the ball of radius $a_0$, with $a_0$ equal to, say, $1/3$)
and by letting $\chi_{\text{uv}}(\kk)=1-\sum_{\o=\pm}\chi_\o(\kk)$.
We correspondingly split the propagator in its ultraviolet and infrared components: 
\be
g(\xx,\yy)=g^{(1)}(\xx,\yy)+\sum_{\o=\pm} e^{-i\vec k_F^\o(\vec x-\vec y)}g_{\o}^{(\le 0)}(\xx,\yy)\label{eq:4}
\ee
where $g^{(1)}(\xx,\yy)$ and $g_{\o}^{(\le 0)}(\xx,\yy)$ are defined in a way similar to Eq.\eqref{prop}, with $\hat g(\kk)$ replaced by $\chi_{\text{uv}}(\kk) \hat g(\kk)$ and by $\chi_{0}(\kk) \hat g(\kk+\kk_F^\o)$,
respectively.  We then split the Grassmann field as a sum of two independent fields, with propagators $g^{(1)}$ and $g^{(\le 0)}$: 
$$\Psi_{\xx,\s}^\pm=\Psi^{\pm(1)}_{\xx,\s}+\sum_{\o=\pm }e^{\pm i\vec k_F^\o\vec x} \Psi_{\xx,\s,\o}^{\pm(\le0)}$$
and we rewrite the Grassmann Gaussian integration as the product of two independent Gaussians: $P(d\Psi)=P(d\Psi^{(\le 0)})P(d\Psi^{(1)})$. By construction, the integration of the `ultraviolet' field $\Psi^{(1)}$ does not have any infrared singularity and, therefore, can be performed in a straightforward manner, thus allowing us to rewrite the generating function $\mathcal W(f,A)$ as the logarithm of 
\be \frac{e^{\WW^{(0)}(f,A)}}{\mathcal N_0}\int P(d\Psi^{(\le 0)})e^{-V^{(0)}(\Psi^{(\le 0)})+B^{(0)}(\Psi^{(\le 0)}, f, A)},\label{eq:v0}
\ee
where $V^{(0)}$ and $B^{(0)}$ are, respectively, the effective potential and the effective source: they are defined by the conditions that $V^{(0)}(0)=0$ and 
$B^{(0)}(0,f,A)=B^{(0)}(\Psi,0,0)=0$. The normalization constant $\mathcal N_0$ is fixed in such a way that $\mathcal N_0=\int P(d\Psi^{(\le 0)})e^{-V^{(0)}(\Psi^{(\le 0)})}$.
All $\mathcal W^{(0)}$, $V^{(0)}$ and $B^{(0)}$ are expressed as
series of monomials in the $\Psi,f,A$ fields, whose kernels (given by the sum of all possible Feynman diagrams with fixed number and fixed space-time location of the external legs)
are {\it analytic functions} of the interaction strength, for $U$ sufficiently small. The precise statement and the proof of these claims are essentially 
identical to those of \cite[Lemma 2]{GM}, see also \cite[Lemma 5.2]{GMPhall} or \cite[Section 6]{Ale}; details will not belabored here and are left to the reader. 

\medskip

In order to integrate the infrared scales, one has to exploit certain lattice symmetries of the model (which replace those of \cite[Lemma 1]{GM}), which allow us to reduce the 
number of independent {\it relevant} and {\it marginal} terms generated by the multi-scale integration. In particular, the symmetries under which the effective potential $V^{(0)}(\Psi)$ is invariant are the following \cite[Sect.III.B]{GJMP}. 

\medskip

\noindent (1) {\it Discrete rotation:}
\be\label{eq.4.6}
\hat \Psi_{\kk',\s,\o}^-\to e^{i\o \frac{2\pi}{3}n_-}e^{-i\vec k'\cdot\vec\ell_2\,n_-}\hat \Psi_{T\kk',\s,\o}^-\;,\quad 
\hat\Psi_{\kk',\s,\o}^+\to\hat\Psi_{T\kk',\s,\o}^+e^{i\vec k'\cdot\vec\ell_2\,n_-}e^{-i\o \frac{2\pi}{3}n_-}
\ee
where, denoting the Pauli matrices by $\s_1,\s_2,\s_3$, we defined
\be
n_-=(1-\s_3)/2\;,\qquad T\kk'=(k_0',e^{-i\frac{2\p}{3}\s_2}\vec k')\;;
\ee
that is, $T$ is the spatial rotation by $2\p/3$ in the counter-clockwise direction.
\medskip

\noindent (2) {\it Complex conjugation}:
\be
\hat\Psi^{\pm}_{\kk',\s,\o}\rightarrow \hat\Psi^{\pm}_{-\kk',\s,-\o},\quad 
c\rightarrow c^{*}\;,\quad \phi\to-\phi\;,
\ee
where $c$ is a generic constant appearing in $P(d\Psi)$ or in $V(\psi)$.

\medskip

\noindent (3) {\it Horizontal reflections}:
\be
\hat\Psi^{-}_{\kk',\s,\o}\to \s_1\hat\Psi^-_{R_h\kk',\s,\o}\;,\quad \hat\Psi^{+}_{\kk',\s,\o}\to \hat\Psi^+_{R_h\kk',\s,\o}\s_1
\,,\quad (W,\phi)\to (-W,-\phi)
\ee
where $R_h\kk'=(k_0',-k_1',k_2')$.

\medskip

\noindent (4) {\it Vertical reflections}:
\be
\hat\Psi^{\pm}_{\kk',\s,\o}\rightarrow
\hat\Psi^{\pm}_{R_v\kk',\s,-\o}\;,\quad \phi\to-\phi.
\ee
where $R_v\kk'=(k_0',k_1',-k_2')$.

\medskip

\noindent (5) {\it Particle-hole}:
\be
\hat\Psi^{-}_{\kk',\s,\o}\to i\hat\Psi^{+,T}_{P\kk',\s,-\o}\;,\quad \hat\Psi^{+}_{\kk',\s,\o}\to i\hat\Psi^{-,T}_{P\kk',\s,-\o}\;,\quad \phi\to-\phi\;.
\ee
where $P\kk'=(k_0',-k_1',-k_2')$.

\medskip

\noindent (6) {\it Magnetic reflections}:
\be
\hat\Psi^{-}_{\kk',\s,\o}\to-i \s_1\s_3\hat\Psi^-_{-R_v\kk',\s,\o}\;,\quad \hat\Psi^{+}_{\kk',\s,\o}\to -i\hat\Psi^+_{-R_v\kk',\s,\o}\s_3\s_1\;,\quad \phi\to\pi-\phi.\label{eq.4.12}
\ee

\medskip

These symmetries have nonperturbative consequences of the structure of the effective interaction action $V^{(0)}$. 
%
At fixed $W,\phi$, the theory is invariant under the transformations (1), (2)+(4), and (2)+(5). In particular, these transformations leave the quadratic part 
\be \label{ciocioe}Q^{(0)}(\Psi)=\sum_{\s,\o}\int \frac{d\kk'}{2\p|\mathcal B|}\,\hat \Psi^+_{\kk',\s,\o}\hat W_{2;\o}^{(0)}(\kk')\hat \Psi^-_{\kk',\s,\o}\ee
of the effective potential $V^{(0)}(\Psi)$ invariant (in \eqref{ciocioe}, $\int \frac{d\kk'}{2\p|\mathcal B|}$ is a shorthand for the Riemann sum $(\b L^2)^{-1}
\sum_{k_0\in \frac{2\pi}\b\mathbb Z}\sum_{\vec k\in\mathcal B_L}$). 
This means that:
\bea\label{symm} \hat W_{2;\o}^{(0)}(\kk')&=&e^{-i(\o\frac{2\pi}{3}+\vec k'\cdot\vec\ell_1)n_-} \hat W_{2;\o}^{(0)}(T^{-1}\kk')e^{i(\o\frac{2\pi}{3}+\vec k'\cdot\vec \ell_1)n_-}\\
&=&\big[\hat W_{2;\o}^{(0)}(-k_0',-k_1',k_2')\big]^*=\big[\hat W_{2;\o}^{(0)}(-k_0',k_1',k_2')\big]^\dagger.\nn\eea
The values of $\hat W_{2;\o}^{(0)}(\kk')$ and of its derivatives at $\kk'=\V0$ define the {\it effective coupling constants}. By computing Eq.\eqref{symm} at $\kk'=\V0$, 
we find, for $\o=\pm$,
\be 
\hat W_{2;\o}^{(0)}(\V0)=e^{-i\frac{2\p}3\o n_-}\hat W_{2;\o}^{(0)}(\V0)e^{i\frac{2\p}3\o n_-}= \big[\hat W_{2;\o}^{(0)}(\V0)\big]^{*}=\big[\hat W_{2;\o}^{(0)}(\V0)\big]^\dagger\;.
\ee
This implies:
\be \hat W_{2;\o}^{(0)}(\V0)=\xi_{\o,0}+\d_{\o,0}\s_3,\label{eq:w2_loc0}\ee
for two {\it real} constants $\xi_{\o,0}$ and $\d_{\o,0}$. Let us now discuss the structure of the derivative of the kernel of the quadratic terms. By taking the derivative of Eq.(\ref{symm}) w.r.t. $\kk'$ and then setting $\kk'=\V0$, we get: 
\be\label{symm_der}
\partial_{\kk'}\hat W_{2;\o}^{(0)}(\V0)=e^{-i\frac{2\p}3\o n_-}T\partial_{\kk'}\hat W_{2;\o}^{(0)}(\V0)e^{i\frac{2\p}3\o n_-}=(-R_v)\partial_{\kk'} \hat W_{2;\o}^{(0)*}(\V0)
=(-P)\partial_{\kk'}\hat W_{2;\o}^{(0)\dagger}(\V0),\ee
where $R_v$ (resp. $P$) is the diagonal matrix with diagonal elements $(1,1,-1)$ (resp. $(1,-1,-1)$). Eq.(\ref{symm_der}) implies that:
\be \kk'\partial_{\kk'}\hat W_{2;\o}^{(0)}(\V0) =
\begin{pmatrix}  -i z_{1,\o} k_{0}' & -u_{\o}(-i k_{1}' +\o k_{2}') \\ -u_{\o}(i k_{1}' + \o k_{2}') & -i z_{2,\o} k_{0}'  \end{pmatrix},\label{eq:6}\ee
where $u_\o, z_{1,\o},z_{2,\o}$ are {\it real} constants.
\medskip

The integration of $\Psi^{(\le 0)}_\o$ is performed iteratively. One rewrites $\Psi^{(\leq 0)}_{\o} = \sum_{h\leq 0} \Psi^{(h)}_{\o}$, for suitable single-scale fields $\Psi^{(h)}_{\o}$. The covariance $\hat g^{(h)}_{\o}$ of $\Psi^{(h)}_{\o}$, supported for quasi-momenta $\kk'$ such that $a_{0}2^{h-1}\leq |\kk'|\leq a_{0}2^{h+1}$, will be defined inductively. We consider two different regimes. The first corresponds to scales $h\ge h^*_1$, with 
\be h^*_1:=\min\{0,\lfloor \log_2 m_{R,+}\rfloor\}, \label{h*1}\ee
and the rest to scales $h^*_1\ge h\ge h^*_2$ with 
$h^*_2:=\min\{0,\lfloor \log_2 |m_{R,-}|\rfloor\}$  (recall that we are focusing on the case that 
$m_{R,+}>|m_{R,-}|$.). We describe the iteration in an inductive way. Assume that the fields $\Psi^{(0)}, \Psi^{(-1)},\ldots,\Psi^{(h+1)}$, $h\ge h^*_1$, have been integrated out
and that after their integration the generating function has the following structure, analogous to the one at scale $0$:
\be e^{ \WW(f,A) }=
\frac{e^{\WW^{(h)}(f,A)}}{\mathcal N_h}\int P(d\Psi^{(\le h)})e^{-V^{(h)}(\Psi^{(\le h)})+B^{(h)}(\Psi^{(\le h)}, f, A)},\label{eq:vh}
\ee
where $V^{(h)}$ and $B^{(h)}$ are, respectively, the effective potential and source terms, satisfying the conditions that $V^{(h)}(0)=0$ and 
$B^{(h)}(0,f,A)=B^{(h)}(\Psi,0,0)=0$. 
The normalization constant $\mathcal N_h$ is fixed in such a way that $\mathcal N_h=\int P(d\Psi^{(\le h)})e^{-V^{(h)}(\Psi^{(\le h)})}$.
Here, $P(d\Psi^{(\le h)})$ is the Grassmann Gaussian integration with propagator (diagonal in the $\s$ and $\o$ indices)
\be g^{(\le h)}_\o(\xx,\yy)=\int P(d\Psi^{(\le h)})\Psi^{-(\le h)}_{\xx,\s,\o}\Psi^{+(\le h)}_{\yy,\s,\o}
=\int\frac{d\kk'}{(2\p)^{3}}\,
e^{-i\kk'(\xx-\yy)} \hat g_\o^{(\le h)}(\kk'),\nn\ee
where, letting 
\bea && r_\o(\vec k')= R(\vec k'+\vec k_F^\o),\quad s_\o(\vec k')=-[\O(\vec k'+\vec k_F^{\,\o})-\frac32(ik_1'+\o k_2')],\\
&& 
m_{-}(\vec k')=m_{\text{R},-}+2t_2\big(\a_2(\vec k'+\vec k_F^-)-\a_2(\vec k_F^-)\big)\sin\phi,\\
&& m_{+}(\vec k')=m_{\text{R},-}+6\sqrt3t_2\sin\phi+2t_2\big(\a_2(\vec k'+\vec k_F^+)-\a_2(\vec k_F^+)\big)\sin\phi,\eea
and $\c_h(\kk')=\c_0(2^{-h}\kk')$, 
\be\hat g_\o^{(\le h)}(\kk')=\c_h(\kk') \begin{pmatrix} a_{1,\o,h}(\kk') & b^*_{\o,h}(\kk')\\
b_{\o,h}(\kk') & a_{2,\o,h}(\kk')\end{pmatrix}^{\!\!\!-1},\label{eq:prop.2}\ee
with
\bea && a_{\r,\o,h}(\kk)=-ik_0Z_{\r,\o,h}+r_\o(\vec k')+(-1)^{\r-1} m_{\o}(\vec k'),\nn\\
&& b_{\o,h}(\kk')=-v_{\o,h} (ik_1'+\o k_2')+s_\o(\vec k')\;,\label{eq:prop.3}
\eea
and the understanding that $(-1)^{\rho-1}$ is equal to $+1$, if $\rho=A$, and equal to $-1$, if $\rho=B$.
The quantities $Z_{\r,\o,h}$ and $v_{\o,h}$ are {\it real}, and they have, respectively, the meaning of wave function renormalizations and of effective velocities.
Note that $r_\o(\vec k')$ and $s_\o(\vec k')$ are both of order $O(|\vec k'|^2)$, while the mass satisfies (again, recall that $m_{\text{R},+}=m_{\text{R},-}+6\sqrt3 t_2\sin\phi$):
$$m_{\o}(\vec k')=m_{\text{R},\o}+t_\o(\vec k'), \quad \text{with}\quad t_\o(\vec k')=O(|\vec k'|^2).$$
By definition, the representation above is valid at the initial step, $h=0$. In order to inductively prove its validity at the generic step, let us discuss how to pass from scale $h$ to scale $h-1$, that is, how to integrate out the field $\Psi^{(h)}$, and how to re-express the resulting effective theory in the form \eqref{eq:vh}, with $h$ replaced by $h-1$. 
Before integrating the $\Psi^{(h)}$ field out, we split $V^{(h)}$ and $B^{(h)}$ into their {\it local} and {\it irrelevant} parts (here, for 
simplicity, we spell out the definitions only in the $f=0$ case): 
$V^{(h)}=\LL V^{(h)}+\RR V^{(h)}$ and $B^{(h)}=\LL B^{(h)}+\RR B^{(h)}$, where, denoting the quadratic part of $V^{(h)}$ by $$Q^{(h)}(\Psi) = \sum_{\o,\s}\int \frac {d\kk'}{(2\p)^{3}}\, \hat \Psi^+_{\kk',\s,\o} \hat W^{(h)}_{2;\o}(\kk')
\hat \Psi^-_{\kk',\s,\o},$$ and
the part of $B^{(h)}$ of order $(2,0,1)$ in $(\psi,f,A)$ by $$Q^{(h)}(\Psi, A) = \sum_{\o,\s,\mu}\int\frac{d\pp}{(2\p)^3}\int \frac {d\kk'}{(2\p)^{3}}\, \hat A_{\pp,\m}\hat \Psi^+_{\kk'+\pp,\s,\o} \hat W^{(h)}_{2,1;\m,\o}(\kk',\pp)\hat \Psi^-_{\kk',\s,\o}$$
we let:
\be \LL V^{(h)}(\Psi)=\sum_{\o,\s}\int \frac {d\kk'}{(2\p)^{3}}\,
\hat \Psi^+_{\kk',\s,\o} [\hat W^{(h)}_{2;\o}({\bf 0})+\kk'\partial_{\kk'}\hat W^{(h)}_{2;\o}({\bf 0})\big]\hat \Psi^-_{\kk',\s,\o},\nn\ee
and 
\be \LL B^{(h)}(\Psi,0,A)=\sum_{\o, \s, \m}\int\frac{d\pp}{(2\p)^3}\int \frac {d\kk'}{(2\p)^{3}}\,
\hat A_{\pp,\m}\hat \Psi^+_{\kk'+\pp,\s,\o} \hat W^{(h)}_{2,1;\m,\o}({\bf 0},{\bf 0})\hat \Psi^-_{\kk',\s,\o}.\nn
\ee
By the symmetries of the model,
\bea  \LL V^{(h)}(\Psi)&=&\sum_{\o, \s}\int \frac {d\kk'}{(2\p)^{3}}\, \Big[2^h\xi_{\o,h}\hat \Psi^+_{\kk',\s,\o} \hat \Psi^-_{\kk',\s,\o}+
2^h\d_{\o,h}\hat \Psi^+_{\kk',\s,\o} \s_3 \hat \Psi^-_{\kk',\s,\o}
\label{eq:lv}\\
&&+\hat \Psi^+_{\kk',\s,\o} \begin{pmatrix} -i z_{1,\o,h} k_{0}& -u_{\o,h}(-i k_{1}' +\o k_{2}') \\ -u_{\o,h}(i k_{1}' + \o k_{2}') & -i z_{2,\o,h} k_{0} \end{pmatrix}
\hat \Psi^-_{\kk',\s,\o}\Big],\nn\eea
where $\xi_{\o,h}, \d_{\o,h},z_{\r,\o,h}, u_{\o,h}$ are real constants and $\s_3$ is the third Pauli matrix. We also denote by $\g_{\m,\o,h}:=\hat W^{(h)}_{2,1;\m,\o}({\bf 0},{\bf 0})$ the 
{\it vertex functions}, entering the definition of $\LL B^{(h)}(\Psi,0,A)$. Notice that their structure is constrained by the Ward Identities. E.g., using \cite[Eq.(20)]{GJMP}, one finds 
that $\g_{0,\o,h}=-\sum_{\r=1}^2(Z_{\r,\o,h}+z_{\r,\o,h})n_\r$ (where $n_\r=(1+(-1)^{\r-1}\s_3)/2$), $\g_{1,\o,h}=-(v_{\o,h}+u_{\o,h})\s_2$, and $\g_{2,\o,h}=-\o(v_{\o,h}+u_{\o,h})\s_1$. However, in the following, we will neither need these identities, nor to identify any special structure of $\g_{\mu,\o,h}$, with $\mu=3,4$. 

\medskip

Once the effective potential and source have been split into local and irrelevant parts, we combine the part of $\LL V^{(h)}$ in the second line of \eqref{eq:lv} with the 
Gaussian integration $P(d\Psi^{(\le h)})$, thus defining a dressed measure $\tilde P(d\Psi^{(\le h)})$ whose propagator $\tilde g^{(\le h)}_\o(\xx,\yy)$ is 
analogous to $g^{(\le h)}_\o(\xx,\yy)$, with the only difference that the functions $a_{\r,\o,h}$, $b_{\o,h}$ in \eqref{eq:prop.2}-\eqref{eq:prop.3}
are replaced by \bea \tilde a_{\r,\o,h-1}(\kk)&=&-ik_0\tilde Z_{\r,\o,h-1}(\kk')+r_\o(\vec k')+(-1)^{\r-1} m_{\o}(\vec k'),\nn\\
\tilde b_{\o,h-1}(\kk')&=&-\tilde v_{\o,h-1}(\kk') (ik_1'+\o k_2')+s_\o(\vec k'),\nn\eea 
with \bea &&\tilde Z_{\r,\o,h-1}(\kk')=Z_{\r,\o,h}+z_{\r,\o,h}\,\c_h(\kk'),\nn\\
&& \tilde v_{\o,h-1}(\kk')=v_{\o,h}+u_{\o,h}\,\c_h(\kk').\nn\eea
Now, by rewriting the support function $\c_h(\kk')$ in the definition of $\tilde g^{(\le h)}_\o(\xx,\yy)$ as $\c_h(\kk')=f_h(\kk')+\c_{h-1}(\kk')$,
we correspondingly rewrite: $\tilde g^{(\le h)}_\o(\xx,\yy)=\tilde g^{(h)}_\o(\xx,\yy)+g^{(\le h-1)}_\o(\xx,\yy)$, where $g^{(\le h-1)}_\o(\xx,\yy)$ is defined exactly as in 
\eqref{eq:prop.2}-\eqref{eq:prop.3}, with $h$ replaced by $h-1$, and $Z_{\r,\o,h-1}, v_{\o,h-1}$ defined by the flow equations:
\be Z_{\r,\o,h-1}=Z_{\r,\o,h}+z_{\r,\o,h},\qquad  v_{\o,h-1}=v_{\o,h}+u_{\o,h}.\nn\ee
We are now ready to integrate the fields on scale $h$. We define:
\bea &&e^{-V^{(h-1)}(\Psi)+B^{(h-1)}(\Psi,f,A)+w^{(h)}(f,A)}=C_h\int \tilde P(d\Psi^{(h)}) e^{-F_\xi^{(h)}(\Psi^{(h)}+\Psi)-F_\d^{(h)}(\Psi^{(h)}+\Psi)}\times\nn\\
&&\hskip3.truecm \times 
e^{-\RR V^{(h)}(\Psi^{(h)}+\Psi)+\mathcal L B^{(h)}(\Psi^{(h)}+\Psi, f, A)+
\mathcal R B^{(h)}(\Psi^{(h)}+\Psi, f, A)},\eea
where $\tilde P(d\Psi^{(h)})$ is the Gaussian integration with propagator $\tilde g^{(h)}_\o$, 
$$F_\xi^{(h)}(\Psi)=\sum_{\o}2^h\xi_{\o,h}\int \frac {d\kk'}{(2\p)^{3}}\hat \Psi^+_{\kk',\s,\o} \hat \Psi^-_{\kk',\s,\o},
\qquad F_\d^{(h)}(\Psi)=\sum_{\o}2^h\d_{\o,h}\int \frac {d\kk'}{(2\p)^{3}}\hat \Psi^+_{\kk',\s,\o} \s_3\hat \Psi^-_{\kk',\s,\o},$$ and $C_h^{-1}=
\int \tilde P(d\Psi^{(h)})e^{-F_\xi^{(h)}(\Psi^{(h)})+\RR V^{(h)}(\Psi^{(h)})}$. Finally, letting $\mathcal W^{(h-1)}=\mathcal W^{(h)}+w^{(h)}$, we obtain the 
same expression as \eqref{eq:vh}, with $h$ replaced by $h-1$. This concludes the proof of the inductive step, corresponding to the integration of the fields on scale $h$, with 
$h\ge h^*_1$. By construction, the running coupling constants $\vec\t_h=(\x_{\o,h},\d_{\o,h}, Z_{A,\o,h}, Z_{B,\o,h},v_{\o,h})_{\o\in\{\pm\}}$
verify the following recursive equations:
\bea
\xi_{\o,h-1}&=&2\xi_{\o,h}+\b^\xi_{\o,h}(U, \vec \t_h,\ldots,\vec \t_0),\nn\\
\d_{\o,h-1}&=&2\d_{\o,h}+\b^\d_{\o,h}(U,\vec \t_h,\ldots,\vec \t_0),\label{eq:flow}\\
Z_{\r,\o,h-1}&=&Z_{\r,\o,h}+\b^{Z,\r}_{\o,h}(U,\vec \t_h,\ldots,\vec \t_0),\nn\\
v_{\o,h-1}&=&v_{\o,h}+\b^v_{\o,h}(U,\vec \t_h,\ldots,\vec \t_0),\nn\eea
for suitable functions $\b^\sharp_{\o,h}$, known as the (components of the) {\it beta function}. Note that the initial data $\xi_{\o,0},\d_{\o,0},Z_{\r,\o,0},v_{\o,0}$ are 
analytically close to $\xi,\d,1,\frac32 $, respectively; they are not exactly independent of the indices $\r,\o$, due to to the effect of the ultraviolet integration. However, for small 
values of $m_{\text{R},+}$, the difference between the initial data, for different values of the indices, differ at most by $O(U m_{\text{R},+})$ (note that $m_{\text{R},+}=O(|m_{\text{R},+}|+\sin\phi)$).
As we shall see below, the running coupling constants  remain analytically close to their initial data, for all $h\le 0$.
Similarly, the vertex functions satisfy recursive equations driven by the running coupling constants themselves:
$$\g_{\m,\o,h-1}=\g_{\m,\o,h}+\sum_{h'=h}^0\g_{\m,\o,h'}\,\tilde\b^\g_{\m,\o,h'}(U,\vec \t_h,\ldots,\vec \t_0)\;,$$
whose solution remains analytically close to the corresponding initial data, for all $h\le 0$. 

From the structure and properties of the effective propagator on scale $h$, see \eqref{eq:prop.2} and following lines, one recognizes that the effective theory at scale $h$ 
is a lattice regularization of a theory of relativistic fermions with masses $m_{\text{R}, \pm}$. As anticipated above, $Z_{\r,\o,h}$ and $v_{\o,h}$ remain analytically 
close to their initial data $1,\frac32$, for all $h\le 0$: therefore, it is straightforward to check that the single scale propagator satisfies
\be
|g^{(h)}_\o(\xx,\yy)|\le C_N \frac{2^{2 h}}{1+(2^h|\xx-\yy|)^N }\;,\qquad \forall N\ge 1\;.
\ee
Moreover, the single-scale propagator admits the decomposition:
\be
g^{(h)}_\o(\xx,\yy) = G^{(h)}_{\o}(\xx,\yy) + g^{(h)}_{\o,r}(\xx,\yy)\label{ao}
\ee
where $G^{(h)}_{\o}(\xx,\yy)$ is obtained from $g^{(h)}_\o(\xx,\yy)$ by setting $m_{\text{R},\o}=0$, and where the remainder term $g^{(h)}_{\o,r}$ satisfies the same bound as $g^{(h)}_\o$ times an extra factor $m_{\text{R}, \o} 2^{-h}$, which is small, for all scales larger than $h^*_1$.

Due to the fact that $m_{\text{R}, +} \geq |m_{R, -}|$, once we reach the scale $h = h^{*}_{1}$, the infrared propagator of the field corresponding to $\o = +$ satisfies the following bound:
\be\label{eq:h1+}
|g^{(\le h_1^*)}_+(\xx,\yy)|\le C_N \frac{2^{2 h^*_1}}{1+(2^{ h^*_1}|\xx-\yy|)^N }\;;
\ee
that is, it admits the same qualitative bound as the corresponding single scale propagator on scale $h = h^{*}_{1}$. For this reason, it can be integrated in a single step, without any 
further need for a multiscale analysis. We do so and, after its integration, we are left with an effective theory on scales $h\le h^*_1$, depending only on $\Psi^{(\leq h^{*}_{1})}_{-}$, 
which we integrate in a multiscale fashion, similar to the one described above, until the scale $h = h^{*}_{2}$ is reached. At that point, the infrared 
propagator $g^{(\le h_2^*)}_-$ satisfies a 
bound similar to (\ref{eq:h1+}), with $h^{*}_{1}$ replaced by $h^{*}_{2}$, and the corresponding field can be integrated in a single step. The outcome of the final integration is the 
desired generating function. 

\medskip

The iterative integration procedure described above provides an explicit algorithm for computing the kernels of the effective potential and sources. In particular, 
they can be represented as sums of {\it Gallavotti-Nicol\` o trees}, identical to those of \cite[Section 3]{GM}, modulo the following minor differences. 
The endpoints $v$ on scale $h_{v} = +1$ are associated either with $F^{(0)}_\x(\Psi^{(\le 0)})$, or with $F^{(0)}_\d(\Psi^{(\le 0)})$, or with $\LL B^{(0)}(\Psi^{(\le 0)})$, or with one of the 
terms in $\RR\VV^{(0)}(\Psi^{(\le 0)})$ or in $\RR B^{(0)}(\Psi^{(\le 0)},f,A)$; the endpoints on scale $h_{v}\le 0$ are, instead, associated either with $F^{(h_v-1)}_\x(\Psi^{(\le h_v-1)})$, 
or with $F^{(h_v-1)}_\d(\Psi^{(\le h_v-1)})$, or with $\LL B^{(h_v-1)}(\Psi^{(\le h_v-1)},f,A)$. The most important novelty of the present construction, as compared with 
\cite{GM}, is the presence of the relevant couplings $\x_{\o,h},\d_{\o,h}$, whose flow must be controlled by properly choosing the counterterms $\xi$ and $\d$, see discussion below. 
Recall that the flows of $\xi_{+,h}$ and $\d_{+,h}$ stop at scale $h^*_1$; for smaller scales, we let $\x_{+,h'}=\d_{+,h'}=0$, $\forall h'<h^*_1$.
Similarly, we let the other running coupling constants with $\o=+$, that is, $Z_{\r,+,h}$ and $v_{+,h}$, be zero for scales smaller than $h^*_1$.
It turns out that the tree expansion is {\it absolutely convergent}, provided that $U$ is small enough and the relevant couplings remain small, uniformly in the scale $h\le 0$.  
More precisely, the kernels of the effective potential satisfy the following bound (a similar statement is valid, of course, for the kernels of the effective source). Notation-wise, 
we let $W_n^{(h)}(\xx_1,\ldots,\xx_n)$ be the kernel of the effective potential $\mathcal V^{(h)}(\Psi)$ associated with the monomial in $\Psi$ of order $n$; of course, $W^{(h)}_n$ is non 
zero only if $n$ is even. The arguments $\xx_1,\ldots,\xx_n$ are the space-time coordinates of the Grassmann fields; the kernel implicitly depends also on the $\rho,\o$ indices of the 
external fields, but we do not spell out their dependence explicitly. We also let $\|W^{(h)}_n\|_1:=\int d\xx_2\cdots d\xx_n |W^{(h)}_n(\xx_1,\ldots,\xx_n)|$ (here $\int d\xx$ is a shorthand 
for $\int_0^\b dx_0\sum_{\vec x\in\L_L}$), which is  independent of $\xx_1$, due to translational invariance. 

\begin{lemma}\label{lem:W} There exist positive constants $U_0, \th, C_0$, such that the following is true. 
Suppose that $\max_{\r,\o,k\ge h}\{|Z_{\rho,\o,k} - 1|,|v_{\o, k} - \frac32|,|\x_{\o, k}|,|\d_{\o, k}|\}\leq C|U|$. 
Then, the kernels of the effective potential on scale $h-1$ are analytic in $U$ for $|U|\le U_0/(C+1)$, and satisfy the bound
\bea\label{bbasic}
&&\| {W}^{(h-1)}_{2} \|_{1}\leq C|U|2^{h}+C_0|U|2^{h(1+\th)}\;,\\
&& \| {W}^{(h-1)}_{n} \|_{1}\leq C_0^n |U|^{\frac{n}2-1} 2^{h(3 - n+\th)}\,, \qquad \forall n\ge 4\;.\label{bbasic2}
\eea
%
%
The components of the beta function are analytic in $U$ in the same domain, and satisfy:
\be \big|\b^\#_{\o,h}(U, \vec \t_h,\ldots, \vec \t_0)\big|\le C_0 |U| 2^{\th h}\,.\label{eq.bound}\ee
\end{lemma}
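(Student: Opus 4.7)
The plan is to prove \eqref{bbasic}--\eqref{bbasic2} and \eqref{eq.bound} by induction on the scale $h$, decreasing from $0$ down to $h_2^*$, via the Gallavotti--Nicol\`o tree expansion set up in the excerpt just before the statement. Every contribution to the kernels of $\mathcal V^{(h-1)}$ is represented as a sum over labeled trees $\tau$ with root scale $h-1$, endpoints $v$ associated either with the initial data at scale $+1$ (a bare quartic vertex $U\mathcal V$, a counterterm $\delta,\xi$, or an irrelevant source term) or with a local relevant insertion $F_\xi^{(h_v-1)}, F_\delta^{(h_v-1)}, \mathcal L B^{(h_v-1)}$ produced by the iterative $\mathcal L$--extraction on scale $h_v-1$, and internal vertices $v$ with a cluster structure and an associated action of $\mathcal R$. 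The branches of $\tau$ are contracted with single-scale propagators $\tilde g^{(h_v)}_\omega$ whose $L^1$ norm is bounded, thanks to the inductive assumption on $Z_{\rho,\omega,k}, v_{\omega,k}$, by $C\, 2^{-h_v}$; their decay rate $2^{h_v}$ is uniform in the mass, using the decomposition \eqref{ao}.

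The first step is the tree-by-tree dimensional bound. A standard graphical argument (identical to \cite[Section 3, Lemma 3]{GM} or \cite[Section 5]{GMPhall}) yields
\[
\|\mathrm{Val}(\tau)\|_1 \le C^{N(\tau)}\, |U|^{k(\tau)}\prod_{v\,\text{int.}} 2^{(h_v - h_{v'})\, D_v}\,,
\]
where $v'$ is the parent of $v$, $k(\tau)$ counts the endpoints carrying an $U$-factor, and $D_v = -3 + n_v + z_v$ is the \emph{negative} of the effective scaling dimension of the cluster rooted at $v$: here $n_v$ is the number of external fermion legs of the cluster, and $z_v\ge 0$ is the dimensional gain produced by the renormalization operator $\mathcal R$ acting on $v$, which amounts to a Taylor subtraction of the local relevant/marginal part. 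The definition of $\mathcal R$ is tailored so that every non-local vertex satisfies $D_v \ge \theta$ for some $\theta\in(0,1)$ independent of $h$ and $U$, thereby making the theory power-counting irrelevant away from the local insertions that have been extracted.

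Summation over trees with fixed root scale $h-1$ and fixed number $n$ of external legs converges for $|U|\le U_0/(C+1)$ by the standard Gawedzki--Kupiainen/Benfatto--Mastropietro bound on the number of trees with given topology, combined with the inductive bound $|\xi_{\omega,k}|, |\delta_{\omega,k}|, |Z_{\rho,\omega,k}-1|, |v_{\omega,k}-\tfrac32|\le C|U|$, which controls the endpoint weights. The resulting estimate is
\[
\|W_n^{(h-1)}\|_1 \le C_0^n\, |U|^{\max(1,\,n/2-1)}\, 2^{(h-1)(3-n)}\, 2^{\theta (h-1)}\,,
\]
the extra $2^{\theta(h-1)}$ being the "short-memory" factor, extracted by allocating a fraction of the dimensional gain $z_v$ uniformly along the tree. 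For $n\ge 4$ this is \eqref{bbasic2}. For $n=2$, trees whose only endpoint is a local insertion $F_\xi^{(h)}$ or $F_\delta^{(h)}$ contribute the summand $C|U|2^h$ in \eqref{bbasic} (the $2^h$ being the prefactor of the running couplings themselves), while all other two-legged trees carry at least one $\mathcal R$-gain and produce the improved $C_0|U|2^{h(1+\theta)}$ term. Analyticity in $U$ on the stated disk follows from the fact that the tree sum is absolutely convergent term by term in powers of $U$, by a Cauchy estimate.

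The bound \eqref{eq.bound} on the components of the beta function is obtained by the same argument: $\beta^{\#}_{\omega,h}$ is the sum over trees with root scale $h$ whose root vertex is a local operator (an $\xi_{\omega,h-1}$, $\delta_{\omega,h-1}$, $z_{\rho,\omega,h-1}$ or $u_{\omega,h-1}$ insertion), excluding the single-endpoint contribution which, by construction, is already encoded in the relation between $\mathcal L V^{(h)}$ and $\mathcal L V^{(h-1)}$. Every such tree has at least one internal vertex carrying the $\mathcal R$-gain, whence the uniform factor $2^{\theta h}$. The main obstacle throughout is the verification of the short-memory property, \emph{i.e.}, that the cumulative dimensional gain of $\mathcal R$ along every path in the tree can be converted into a uniform factor $2^{\theta h}$ at the root, with constants independent of $h$ and of the depth of the tree; this is however a by-now-standard estimate, whose detailed implementation in a setting very close to the present one (linear dispersion around a conical Fermi point, quartic short-range interaction) is carried out in \cite{GM, GMP1}, to which we refer for the missing details.
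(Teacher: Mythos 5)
The paper itself gives only a sketch, delegating the detailed estimate to \cite[Theorem 2]{GM}; your proposal reconstructs the same standard tree-expansion argument, and for the bounds \eqref{bbasic}--\eqref{bbasic2} your discussion is faithful to the paper's reasoning (tree-by-tree dimensional bound, short-memory factor, sum over trees, analyticity from absolute convergence). However, your treatment of the beta-function bound \eqref{eq.bound} misses the one nontrivial point the paper actually emphasizes.

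The trees that do \emph{not} carry the short-memory factor $2^{\theta h}$ are those with no endpoint on scale $+1$, i.e.\ the ``chain diagrams'' obtained by contracting the two-legged insertions $F^{(k)}_\xi$, $F^{(k)}_\delta$. These chains do contribute to $W^{(h-1)}_2$ — they are exactly the source of the first term $C|U|2^h$ in \eqref{bbasic}, and they do \emph{not} carry an $\mathcal R$-gain. Your claim that ``every such tree has at least one internal vertex carrying the $\mathcal R$-gain'' is therefore false precisely for this class. The reason the chain diagrams nevertheless do \emph{not} contaminate the beta function is different, and is spelled out in the paper: the beta function is obtained by taking the local part of $\widehat W_2^{(h)}$, i.e.\ evaluating it at $\kk'={\bf 0}$; by momentum conservation all internal propagators of a chain diagram are then evaluated at $\kk'={\bf 0}$, which lies outside the compact support of every single-scale propagator $\hat g^{(k')}$ with $k'>h$, so the chain diagram vanishes identically at that point. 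Your sentence about ``excluding the single-endpoint contribution'' only disposes of the trivial one-vertex tree, not of the multi-vertex chains. Without the compact-support argument, the decomposition of the $n=2$ bound into $C|U|2^h+C_0|U|2^{h(1+\theta)}$ is incompatible with the improved bound $C_0|U|2^{\theta h}$ on the beta function: you would be claiming a gain for the local part of precisely those contributions that you have just agreed do not have one. This is a genuine gap, not a matter of presentation.

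Two minor remarks. First, the paper singles out the Brydges--Battle--Federbush determinant formula and the Gram--Hadamard bound as the ``two key ingredients'' for convergence of the tree sum, which is what underlies the Gawedzki--Kupiainen/Benfatto--Mastropietro bound you invoke; it would be clearer to name them. Second, when you write ``trees whose only endpoint is a local insertion'', you presumably mean ``trees all of whose endpoints are local insertions on scales $\le 0$'' (the chain diagrams generically have several such endpoints); as phrased, the sentence reads as if you restricted to a single endpoint, which is a different and much smaller class.
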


%

The proof of the lemma goes along the same lines as the proof of \cite[Theorem 2]{GM}, see also the review \cite{Gi}, and will not be repeated here. Two key ingredients in the proof 
are: the representation of the iterated truncated expectations in terms of the Brydges-Battle-Federbush determinant formula, and the Gram-Hadamard bound. The factors $2^{\th h}$ appearing in the right sides of \eqref{bbasic},
\eqref{bbasic2} and \eqref{eq.bound}, represent a `dimensional gain', as compared to a more basic, naive, dimensional bound, proportional to $2^{(3-n)h}$, which is suggested by the fact that the scaling dimension
of the contributions to the effective potential with $n$ external fermionic is equal to $3-n$, in the RG jargon (we use the convention that positive/negative scaling dimensions correspond to relevant/irrelevant operators). 
Such a dimensional gain is due to the {\it RG irrelevance} of the quartic interaction (note that $3-n=-1$ for $n=4$) and to the so-called short-memory property of the Gallavotti-Nicol\`o trees (``long trees are exponentially suppressed''): all the contributions to the effective potential associated with trees that have at least one endpoint on scale $+1$ have this additional exponentially decaying factor. The only contributions not having such a gain are 
those associated with trees without endpoints on scale $+1$. The key remark is that, since the running coupling constants are all associated with quadratic contributions in the fermionic fields, 
such contributions are very simple and explicit: they can all be represented as sums of linear Feynman diagrams with two external legs (`chain diagrams'), obtained by contracting in all possible ways the two-legged vertices corresponding to the running coupling constants $\xi_{\o,k},\d_{\o,k}$. Therefore, they only contribute to the quadratic part of the effective potential, and they lead to the first term in the right side of \eqref{bbasic}. 
Note also that such diagrams do not contribute to the beta function: in fact, the beta function at scale $h$ is obtained by taking the `local part' of $W_{2}^{(h)}$, which is equal to the value of the Fourier transform 
$\widehat W_2^{(2)}$ at $\kk'=\V0$. If we compute the chain diagrams at $\kk'=\V0$, we see that the quasi-momenta of all the propagators of the chain diagram are equal to zero;  therefore, the value of the diagram is zero, too, 
due to the compact support properties of the single-scale propagator. 

\medskip

The idea, now, is to use the bound on the beta function to inductively prove the assumption on the running coupling constants, or, more precisely, the following improved version
of the inductive assumption:
\bea  & |Z_{\rho,\o,h} - 1|\leq C|U|,\quad |v_{\o, h} - \frac32|\leq C|U|,& \forall h^*_2\le h\le 0\;,\nn\\
&|\x_{-, h}|\leq C|U|2^{\th h},\quad |\d_{-, h}|\leq C|U|2^{\th h},& \forall h^*_2\le h\le 0\;,
\label{eq.ind}\\
& |\x_{+, h}-\x_{-,h}|\leq C|U|2^{h^*_1-h},\quad |\d_{+, h}-\d_{-,h}|\leq C|U|2^{h^*_1-h},& \forall h^*_1\le h\le 0\;,\nn\eea
for a suitable $C>0$ (recall that, by definition, $\xi_{+,h}=\d_{+,h}=Z_{\r,+,h}=v_{+,h}=0$, $\forall h<h^*_1$). 
Note that the bound on the beta function is already enough to prove the assumption for 
$Z_{\rho,\o,h}$ and $v_{\o,h}$. The subtle point is to control the flow of $\x_{\o, h}$, $\d_{\o, h}$, 
provided the initial data $\x,\d$ are properly chosen. This is the content of the next lemma. 

\begin{lemma}\label{lem:count} There exist positive constants $U_0$, $C$, and functions $\d=\d(U,m_{\text{R},-},\phi)$, $\xi=\xi(U,m_{\text{R},-},\phi)$, analytic in $U$ for $|U|\le U_0/(C+1)$ and vanishing at $U=0$, such that \eqref{eq.ind} 
are verified.
\end{lemma}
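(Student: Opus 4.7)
The proof of Lemma \ref{lem:count} is a Banach fixed-point argument, adapted to the relevant (expanding) character of the flow equations \eqref{eq:flow} for $\xi_{\omega,h}$ and $\delta_{\omega,h}$: these couplings carry RG scaling dimension $+1$ (whence the factor $2$ in the recursion), so a generic choice of the initial counterterms $\xi,\delta$ produces a flow that blows up like $2^{-h}$, and a specific tuning is needed to project the trajectory onto the ``stable manifold'' along which the flow remains small. Since the scale-zero couplings $\xi_{\omega,0},\delta_{\omega,0}$ depend on $\xi,\delta$ through an analytic map close to the identity (a consequence of the convergent expansion of the ultraviolet integration), choosing $(\xi,\delta)$ is equivalent to choosing $(\xi_{-,0},\delta_{-,0})$, and it is this latter pair that I shall determine by a fixed-point argument.

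The plan is as follows. Introduce the Banach space $\mathcal B$ of real sequences $\vec X=\{(\xi_{\omega,h},\delta_{\omega,h})\}_{h^*_2\le h\le 0,\,\omega\in\{\pm\}}$ (with the convention that $\xi_{+,h}=\delta_{+,h}=0$ for $h<h^*_1$), equipped with the norm
\[\|\vec X\|:=\max_{\omega,h}\Big\{2^{-\theta h}|\xi_{\omega,h}|,\ 2^{-\theta h}|\delta_{\omega,h}|,\ 2^{h-h^*_1}|\xi_{+,h}-\xi_{-,h}|,\ 2^{h-h^*_1}|\delta_{+,h}-\delta_{-,h}|\Big\},\]
and let $\mathcal M\subset\mathcal B$ be the closed ball of radius $C|U|$ for a large constant $C$ to be fixed. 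On $\mathcal M$ define a map $\mathcal T$ by solving the relevant recursion in \eqref{eq:flow} backwards: given $\vec X\in\mathcal M$, plug it into the beta functions supplied by Lemma \ref{lem:W} and set
\[\mathcal T(\vec X)^{(\xi)}_{\omega,h}:=-\sum_{k=h^*_{2,\omega}+1}^{h}2^{k-h-1}\,\beta^\xi_{\omega,k}(U,\vec X),\qquad \mathcal T(\vec X)^{(\delta)}_{\omega,h}:=-\sum_{k=h^*_{2,\omega}+1}^{h}2^{k-h-1}\,\beta^\delta_{\omega,k}(U,\vec X),\]
with $h^*_{2,+}:=h^*_1$ and $h^*_{2,-}:=h^*_2$. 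Any fixed point $\vec X^{*}=\mathcal T(\vec X^{*})$ automatically satisfies \eqref{eq:flow} with initial data $\xi^{*}_{\omega,0},\delta^{*}_{\omega,0}$, from which the desired $(\xi,\delta)$ are recovered by inverting the near-identity analytic map linking bare and dressed couplings.

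The estimate $|\beta^\#_{\omega,k}|\le C_0|U|2^{\theta k}$ from Lemma \ref{lem:W}, combined with the geometric bound $\sum_{k\le h}2^{k-h-1}2^{\theta k}\le C'2^{\theta h}$, shows $\mathcal T(\mathcal M)\subset\mathcal M$ once $C$ is taken large enough. The contraction property rests on the fact that, still by Lemma \ref{lem:W}, $\beta^\#_{\omega,k}$ is analytic in $\vec X$ with partial derivatives bounded by $C|U|2^{\theta k}$ times weights summable over the internal scales; this yields a Lipschitz constant for $\mathcal T$ of order $|U|$, hence smaller than $1/2$ for $|U|<U_0$. Banach's theorem then supplies a unique fixed point, analytic in $U$ (as a uniform limit of analytic iterates on the small ball $\mathcal M$) and vanishing at $U=0$ (since $\beta^\#_{\omega,k}$ vanishes at $U=0$ for all $\omega,k$).

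The hard part, I expect, will be the verification of the sharp comparison bound $|\xi_{+,h}-\xi_{-,h}|\le C|U|2^{h^*_1-h}$ for $h^*_1\le h\le 0$, which is strictly stronger than what follows from the individual estimates at scales $h>h^*_1$ (the latter only yield $O(|U|2^{\theta h})$, much larger than $2^{h^*_1-h}$ near scale $0$). To obtain it one must exploit the fact that, at scales $k\ge h^*_1$, the Feynman diagrams contributing to $\beta^\xi_{\pm,k}$ are built out of propagators whose massless parts are, up to the shift $\vec k_F^\pm$ that does not affect the local part, identical. By the decomposition \eqref{ao}, the remainder $g^{(k)}_{\omega,r}$ separating each propagator from its massless version carries an explicit factor $m_{R,\omega}2^{-k}$, which for $\omega=+$ and $k\ge h^*_1$ is bounded by $C\,2^{h^*_1-k}$. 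A diagram-by-diagram comparison (or, equivalently, an interpolation in the mass $m_{R,+}$) then yields $|\beta^\xi_{+,k}-\beta^\xi_{-,k}|\le C|U|2^{\theta k}2^{h^*_1-k}$, which summed against the recursion factor $2^{k-h-1}$ produces exactly the desired bound. The same mechanism absorbs the $O(Um_{R,+})=O(U2^{h^*_1})$ mismatch of the initial data inherited from the ultraviolet integration, and closes the inductive bounds on the $\omega=+$ component at its stopping scale $h=h^*_1$.
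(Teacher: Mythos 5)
Your overall strategy -- a Banach fixed-point argument in a weighted sequence space, driven by the dimensional gain of Lemma \ref{lem:W} and the gain factor produced by the decomposition \eqref{ao} in the differences $\beta^{\xi}_{+,k}-\beta^{\xi}_{-,k}$ -- is the same as the paper's, and your identification of the mechanism behind the comparison bound $|\xi_{+,h}-\xi_{-,h}|\le C|U|2^{h^*_1-h}$ is correct. However, there is a genuine structural gap in how you set up the fixed-point map: by taking $h^*_{2,+}=h^*_1$ as the lower limit in the $\omega=+$ backwards sum you are imposing the \emph{extra} boundary conditions $\xi_{+,h^*_1}=\delta_{+,h^*_1}=0$, in addition to $\xi_{-,h^*_2}=\delta_{-,h^*_2}=0$. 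This treats $(\xi_{+,0},\delta_{+,0})$ as tunable independently of $(\xi_{-,0},\delta_{-,0})$, but the Hamiltonian $\mathcal H^{\text{R}}$ has only \emph{two} counterterms $\xi$ and $\delta$: after the ultraviolet integration, the near-identity map determines $(\xi_{-,0},\delta_{-,0})$, and then $\xi_{+,0}=\xi_{-,0}+\bar x_+(U,\xi_{-,0},\delta_{-,0},m_{\text{R},-},\phi)$, $\delta_{+,0}=\delta_{-,0}+\bar d_+(\cdots)$ are \emph{fixed functions} of order $O(|U|\min\{m_{\text{R},+},1\})$, not free parameters (cf.\ \eqref{xd+}--\eqref{good.bound}). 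The fixed point of your map $\mathcal T$ will generically produce a difference $\xi^*_{+,0}-\xi^*_{-,0}$ that does not equal $\bar x_+$, so it does not correspond to any choice of $(\xi,\delta)$, and the final step ``from which the desired $(\xi,\delta)$ are recovered'' does not go through. A related symptom: your norm bounds $2^{-\theta h}|\xi_{+,h}|$ uniformly, which at $h=h^*_1$ forces $|\xi_{+,h^*_1}|\lesssim |U|2^{\theta h^*_1}$; in the true flow $\xi_{+,h^*_1}$ is only $O(|U|)$ (it inherits the term $2^{-h}\bar x_+$, which is $O(|U|)$ at $h=h^*_1$), so the ball on which you run the contraction excludes the actual solution.

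The paper avoids this by imposing stopping conditions \emph{only} on the $\omega=-$ flow, feeding the resulting $(\xi_{-,0},\delta_{-,0})$ into the fixed $\bar x_+,\bar d_+$ relations to obtain $(\xi_{+,0},\delta_{+,0})$, and then \emph{verifying} (not imposing) that the $\omega=+$ flow stays bounded on $h\ge h^*_1$. Correspondingly, the norm $\|\cdot\|_\theta$ in the paper tracks $2^{-\theta h}|\xi_{-,h}|$ and the weighted difference $2^{h-h^*_1}|\xi_{+,h}-\xi_{-,h}|$, but \emph{not} $2^{-\theta h}|\xi_{+,h}|$. The $\omega=+$ component of the map contains an explicit inhomogeneous term $2^{-h}\bar x_+$ (bounded by $O(|U|)$ for $h\ge h^*_1$ precisely because of the definition \eqref{h*1}), together with the sums over $\beta^{\xi}_{+,k}-\beta^{\xi}_{-,k}$ and $\beta^{\xi}_{-,k}$; it is this structure, with two rather than four boundary conditions, that makes the fixed point compatible with the original model. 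Your estimates on the self-mapping, the Lipschitz constant, and the gain factor would all carry over essentially unchanged once the map is corrected in this way.
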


\begin{proof} We solve the beta function by looking at it as a fixed point equation on a suitable space of sequences.
The fixed point equation arises by iterating the beta function equation and then imposing that $\x_{-,h^*_2}=\d_{-,h^*_2}=0$. 
By iterating the first two equations of (\ref{eq:flow}), we get, for all $h^*_2\le h\le 0$,
\bea\label{eq:map}
\xi_{\o,h} &=& 2^{-h}\big(\xi_{\o,0}+\sum_{k=h+1}^{0} 2^{k-1} \b^\xi_{\o,k}(U, \vec \t_k,\ldots,\vec \t_0)\big)\\
\d_{\o,h} &=& 2^{-h}\big(\d_{\o,0}+\sum_{k=h+1}^{0} 2^{k-1}\b^\d_{\o,k}(U,\vec \t_h,\ldots,\vec \t_0)\big)\;,\nn
\eea
with the understanding that $\xi_{+,h}=\d_{+,h}=0$, $\forall h<h^*_1$.  
Consider first the case $\o=-$. By imposing the condition that $\x_{-,h^*_2}=\d_{-,h^*_2}=0$, we find that 
\be \label{in.dat}\xi_{-,0}=-\sum_{k=h+1}^{0} 2^{k-1} \b^\xi_{-,k}(U,\vec \t_h,\ldots,\vec \t_0),\qquad \d_{-,0}=-\sum_{k=h+1}^{0} 2^{k-1}\b^\d_{-,k}(U,\vec \t_h,\ldots,\vec \t_0).\ee
Plugging these identities back in \eqref{eq:map} with $\o=-$ gives 
\be \label{xd0}\xi_{-,h}=-\sum_{h^*_2<k\le h} 2^{k-h-1}\b^\xi_{-,k}(U,\vec \t_h,\ldots,\vec \t_0),\qquad \d_{-,h}=-\sum_{h^*_2<k\le h} 2^{k-h-1}\b^\d_{-,k}(U,\vec \t_h,\ldots,\vec \t_0),\ee
which is the desired equation for $\xi_{-,h},\d_{-,h}$. Consider next the case $\o =+$. The initial data $\xi_{+,0}, \d_{+,0}$ in the right side of \eqref{eq:map} are regarded as given functions of $U,\xi_{-,0},\d_{-,0},m_{\text{R},-},\phi$, whose explicit form follows from the ultraviolet integration, such that both $\xi_{+,0}-\xi_{-,0}$ 
and $\d_{+,0}-\d_{-,0}$ are of the order $O(U \min\{m_{\text{R},+},1\})$. More explicitly, we write, 
\be \label{xd+}\xi_{+,0}=\xi_{-,0}+\bar x_+(U,\xi_{-,0},\d_{-,0},m_{\text{R},-},\phi), \qquad \d_{+,0}=\d_{-,0}+\bar d_+(U,\xi_{-,0},\d_{-,0},m_{\text{R},-},\phi),\ee
where $\bar x_+$ and $\bar d_+$ are analytic in $U,\xi_{-,0},\d_{-,0}$ for $|\xi_{-,0}|,|\d_{-,0}|\le C|U|$ and $|U|\le U_0/(C+1)$, and satisfy: 
\bea && |\bar x_+(U,\xi_{-,0},\d_{-,0},m_{\text{R},-},\phi)|\le C_1|U|\min\{m_{\text{R},+},1\},\nn \\
&& |\bar x_+(U,\xi_{-,0},\d_{-,0},m_{\text{R},-},\phi)-\bar x_+(U,\xi_{-,0}',\d_{-,0}',m_{\text{R},-},\phi)|\le \label{good.bound}\\
&& \hskip3.truecm \le C_1|U|\min\{m_{\text{R},+},1\}(|\xi_{-,0}-\xi_{-,0}'|+|\d_{-,0}-\d_{-,0}'|)\;,\nn
\eea
for some $C_1>0$, and analogously for $\bar d_+$. Plugging \eqref{xd+}, with $\xi_{-,0},\d_{-,0}$ written as in \eqref{in.dat}, back in 
\eqref{eq:map} with $\o=+$, we get the desired equation for $\xi_{+,h},\d_{+,h}$:
\bea\label{eq:map.2}
\xi_{+,h} &=& 2^{-h}\big(\bar x_++\sum_{k=h+1}^{0} 2^{k-1} (\b^\xi_{+,k}-\b^\xi_{-,k}) -\sum_{k=h^*_2+1}^h2^{k-1}\b^\xi_{-,k}\big)\;,\\
\d_{+,h} &=& 2^{-h}\big(\bar d_++\sum_{k=h+1}^{0} 2^{k-1} (\b^\d_{+,k}-\b^\d_{-,k}) -\sum_{k=h^*_2+1}^h2^{k-1}\b^\d_{-,k}\big)\;,\nn
\eea
for all $h^*_1\le h\le 0$. The equations \eqref{xd0} and \eqref{eq:map.2}, together with the analogues of \eqref{eq:map} for the running coupling constants $Z_{\r,\o,h},v_{\o,h}$, 
are looked at as a fixed point equation on the space $\mathcal M$ of sequences of 
running coupling constants $\underline{\t}:=\{\vec \t_{h^*_2},\ldots,\vec\t_0\}$, endowed with the norm 
\bea \|\underline\t\|_\th=\max&\Big\{&\max_{\substack{h\le 0\\ \o,\r}}\{|Z_{\r,\o,h}-1|,|v_{\o,h}-\frac32 |,2^{-\th h}|\xi_{-,h}|,2^{-\th h}|\d_{-,h}|\},\nn\\
&\phantom{\Big\{}& \hskip-.2truecm\max_{h^*_1\le h\le 0}\{|\xi_{+,h}-\xi_{-,h}|2^{h-h^*_1},|\d_{+,h}-\d_{-,h}|2^{h-h^*_1}\}\Big\}.\eea
More precisely, the sequence of running coupling constants, solution of the flow equation with initial data such that $\x_{-,h^*_2}=\d_{h^*_2}=0$, 
is the fixed point of the map $\underline \t\to \underline\t'={\bf T}(\underline\t)$ that, in components, reads (we write 
the argument of the beta function as $(U,\underline\t)$, and we do not indicate the argument of $\bar x_+$ and $\bar d_+$, for short):
\bea  
\xi_{-,h}'&=&-\sum_{k=h^*_2+1}^h 2^{k-h-1}\b^\xi_{-,k}(U, \underline\t),\hskip1.truecm \forall h^*_2\le h\le 0\label{xi-h}\\
 \d_{-,h}'&=&-\sum_{k=h^*_2+1}^h 2^{k-h-1}\b^\d_{-,k}(U, \underline\t),\hskip1.truecm\forall h^*_2\le h\le 0\\
 \xi_{+,h}'&=&2^{-h}\bar x_+
+\sum_{k=h+1}^{0} 2^{k-h-1} (\b^\xi_{+,k}(U, \underline\t)-\b^\xi_{-,k}(U, \underline\t))\nn \\
&&- \sum_{k=h^*_2+1}^h2^{k-h-1}\b^\xi_{-,k}(U, \underline\t),\hskip1.truecm\forall h^*_1\le h\le 0\label{xi.1}\\
\d_{+,h}'& =& 2^{-h}\bar d_+
+\sum_{k=h+1}^{0} 2^{k-h-1} (\b^\d_{+,k}(U, \underline\t)-\b^\d_{-,k}(U, \underline\t))\nn \\
&&-\sum_{k=h^*_2+1}^h2^{k-h-1}\b^\d_{-,k}(U, \underline\t),
\hskip1.truecm\forall h^*_1\le h\le 0\label{d.1}\\
Z_{\r,\o,h}'&=&1+\bar z_{\r,\o}
+\sum_{k=h+1}^{0}\b^{Z,\r}_{\o,k}(U, \underline\t),\hskip.97truecm\forall h^*_2\le h\le 0\\
v_{\o,h}'&=&\frac32t_1+\bar v_{\o}
+\sum_{k=h+1}^{0} \b^v_{\o,k}(U, \underline\t)\;,\hskip.73truecm\forall h^*_2\le h\le 0\eea
with the understanding that the running coupling constants with $\o=+$ are zero for all scales smaller than $h^*_1$: $\x_{+,h}=\d_{+,h}=Z_{\r,+,h}=v_{+,h}=0$, for all $h<h^*_1$.
Moreover, in the last two lines, we rewrote $Z_{\r,\o,0}=1+\bar z_{\r,\o}$ and $v_{\o,0}=\frac32+\bar v_\o$, 
where $\bar z_{\r,\o}=\bar z_{\r,\o}(U,\xi_{-,0},\d_{-,0},m_{\text{R},-},\phi)$ and $\bar v_{\o}=\bar v_\o(U,\xi_{-,0},\d_{-,0},m_{\text{R},-},\phi)$ are analytic in $U,\xi_{-,0},\d_{-,0}$ for $|\xi_{-,0}|,|\d_{-,0}|\le C|U|$ and $|U|\le U_0/(C+1)$, and satisfy: 
\bea && |\bar z_{\r,\o}(U,\xi_{-,0},\d_{-,0},m_{\text{R},-},\phi)|\le C_1|U|,\nn \\
&& |\bar z_{\r,\o}(U,\xi_{-,0},\d_{-,0},m_{\text{R},-},\phi)-\bar z_{\r,\o}(U,\xi_{-,0}',\d_{-,0}',m_{\text{R},-},\phi)|\le \label{ggo.bound.2} \\
&&\hskip4.6truecm \le C_1|U|(|\xi_{-,0}-\xi_{-,0}'|+|\d_{-,0}-\d_{-,0}'|)\;,\nn
\eea
and analogously for $\bar v_\o$. In addition, the differences $\bar z_{\r,+}-\bar z_{\r,-}$ and $\bar v_+-\bar v_-$ satisfy the same bound as \eqref{good.bound}. 

We want to show that the map $\underline\t \to {\bf T}(\underline\t)$ admits a unique fixed point in the ball 
$B_0=\{\underline \t\in\mathcal M: \|\underline\t\|_\th\le C|U|\}$, for a suitable $C>0$. In order to prove this, we show that, if $\underline\t,\underline\t'\in B_0$,  
\be\label{dd}\|{\bf T}(\underline\t)\|_\th\le C|U|,\qquad \|{\bf T}(\underline\t)-{\bf T}(\underline\t')\|_\th\le C|U|\, \|\underline\t-\underline\t'\|_\th\,,\ee
for a suitable $C$. Once \eqref{dd} is proved, the existence of a unique fixed point in $B_0$ follows via the Banach fixed point theorem, and we are done:
such a fixed point defines the initial data $\x_{-,0},\d_{-,0}$ generating a solution to the flow equation satisfying \eqref{eq.ind}, as desired. Of course, 
fixing $\x_{-,0},\d_{-,0}$ is equivalent (thanks to the analytic implicit function theorem) to fixing $\x,\d$: therefore, the existence of such a fixed point proves the statement of the lemma. 

Therefore, we are left with proving \eqref{dd}. If $\underline \t\in B_0$, by using the bound \eqref{eq.bound} on the beta function, as well as the assumptions \eqref{good.bound},
\eqref{ggo.bound.2} on the initial data (together with their analogues for $\bar d_+,\bar v_\o$), it is immediate to check that 
\be |Z_{\r,\o,h}'-1|\le C|U|, \qquad |v_{\o,h}'-\frac32|\le C|U|,\qquad |\xi_{-,h}'|\le C|U|2^{\th h},\qquad |\d_{-,h}'|\le C|U|2^{\th h},\ee
for all $h^*_2\le h\le 0$ and a suitable constant $C$. Therefore, in order to check that $\|{\bf T}(\underline\t)\|_\th\le C|U|$, we are left with proving that $\max\{ |\xi_{+,h}'-\xi_{-,h}'|,|\d_{+,h}'-\d_{-,h}'|\}\le C|U|2^{h^*_1-h}$, for all $h^*_1\le h\le 0$. We spell out the argument for $\xi_{+,h}'-\xi_{-,h}'$, the proof for $\d_{+,h}'-\d_{-,h}'$ being exactly the same. By using \eqref{xi-h}-\eqref{xi.1}, we have: $\xi_{+,h}'-\xi_{-,h}'=2^{-h}\bar x_++\sum_{k=h+1}^{0} 2^{k-h-1} (\b^\xi_{+,k}(U, \underline\t)-\b^\xi_{-,k}(U, \underline\t))$. Now, the first term in the right side is bounded by  
$2^{-h}|\bar x_+|\le 2C_1|U|$, for all $h\ge h^*_1$, by \eqref{good.bound} and the very definition of $h^*_1$, \eqref{h*1}. 
In order to bound the sum 
$\sum_{k=h+1}^{0} 2^{k-h-1} (\b^\xi_{+,k}(U, \underline\t)-\b^\xi_{-,k}(U, \underline\t))$, we note that $\b^\xi_{+,k}-\b^\xi_{-,k}$ can be expressed as 
a sum over trees with root on scale $k$, at least an endpoint on scale $+1$ (recall the discussion after the statement of Lemma \eqref{lem:W}) and: either an endpoint corresponding to a difference 
$\xi_{+,k'}-\xi_{-,k'}$, or an endpoint corresponding to $\d_{+,k'}-\d_{-,k'}$, or a propagator $g^{(k')}_{+}-g^{(k')}_-$, with $k'\ge k$. 
The propagator $g^{(k')}_{+}-g^{(k')}_-$ admits a dimensional bound that is the same as $g^{(k')}_{\o}$ times
a gain factor $2^{h^*_1-k'}$; the differences $\xi_{+,k'}-\xi_{-,k'}$ and $\d_{+,k'}-\d_{-,k'}$ are proportional to the same gain factor, due to the assumption that 
$\underline\t\in B_0$. All in all, recalling the basic bound on the beta function, \eqref{eq.bound}, we find a similar bound, improved by the gain factor $2^{h^*_1-k}$: 
$$\big|\b^\xi_{+,k}(U,\underline\t)-\b^\xi_{-,k}(U,\underline\t)\big|\le 2C_0|U| 2^{h^*_1-k}2^{\th k}.$$
This, together with the bound on $2^{-h}\bar x_+$, implies the desired bound, $|\xi_{+,h}'-\xi_{-,h}'|\le C|U|2^{h^*_1-h}$, for all $h^*_1\le h\le 0$ and $C$ sufficiently large. Exactly the 
same argument implies the desired bound for $\d_{+,h}'-\d_{-,h}'$. 

\bigskip

The proof of the second of \eqref{dd} goes along the same lines, and we only sketch it here. A similar argument, discussed in all details, can be found in \cite[Section 4]{BM1}. Let us focus, for simplicity, on the first component of ${\bf T}(\underline\t)-{\bf T}(\underline\t')$, which reads:
$$-\sum_{k=h^*_2+1}^h 2^{k-h-1}\big(\b^\xi_{-,k}(U, \underline\t)-\b^\xi_{-,k}(U, \underline\t')\big).$$
The difference
$\b^\xi_{-,k}(U,\underline\t)-\b^\xi_{-,k}(U,\underline\t')$ can be represented as a sum over trees  with root on scale $k$, at least an endpoint on scale $+1$, and: 
either an endpoint corresponding to a difference $\xi_{\o,k'}-\xi_{\o,k'}'$, or an endpoint corresponding to $\d_{\o,k'}-\d_{\o,k'}'$, 
or a propagator corresponding to the difference between $g^{(k')}_{\o}$ computed at the values $(Z_{\rho,\o,k'},v_{\o,k'})$ of the effective parameters 
and the same propagator computed at $(Z_{\rho,\o,k'}',v_{\o,k'}')$, for some $k'\ge k$. The difference between the propagators computed at different values of 
the effective parameters can be bounded dimensionally in the same way as $g^{(k')}_{\o}$, times an additional factor $\max_{\r,\o}\{|Z_{\rho,\o,k'}-Z_{\rho,\o,k'}'|,
|v_{\o,k'}-v_{\o,k'}'|\}$. Therefore, recalling the basic bound on the beta function, \eqref{eq.bound}, we find a similar bound, multiplied by the norm of the difference between the running coupling constants:
\be \Big|\b^\xi_{-,k}(U,\underline\t)-\b^\xi_{-,k}(U,\underline\t')\Big|\le 2C_0|U| 2^{\theta k} \|\underline\t-\underline\t'\|_\th,\ee
which implies the desired estimate on the first component of ${\bf T}(\underline\t)-{\bf T}(\underline\t')$. A similar argument is valid for the other components, but we will not belabor the details here. \end{proof}

\bigskip

We now have all the ingredients to prove Proposition \ref{assm}. In fact, in view of Lemma \ref{lem:W} and Lemma \ref{lem:count}, we can fix the counterterms 
$\xi,\d$ in such a way that the kernels of the effective potential on all scales are analytic in $U$, uniformly in the scale, and satisfy \eqref{bbasic}. 
A simple by-product of the proof shows that the kernel $W^{(h)}_{n}(\xx_1,\ldots,\xx_n)$ decays faster than any power in the tree distance among the 
space-time points $\xx_1,\ldots,\xx_n$, with a decay length proportional to $2^{-h}$. Analogous claims are valid for the kernels of the effective source term 
and of the generating function. In particular, recalling that the scale $h$ is always larger or equal than $h^*_2$, we have that the kernels of the effective potential, which are nothing else but the multi-point correlation functions, are analytic in $U$ and decay faster than any power in the tree distance among their arguments, with a typical decay length of the order $2^{h^*_2}\sim |m_{\text{R},-}|$. Therefore, for any $m_{\text{R},-}\neq0$, the Fourier transform of any 
multi-point correlation of local operators is $C^\infty$ in the momenta. In the massless case, the correlations are dimensionally bounded like in the graphene case \cite{GM,GMP1}: in particular, the two-point density-density, or current-current correlations decay like $|\xx-\yy|^{-4}$ at large Euclidean space-time separation. For further details about the construction and estimate of the correlation functions, 
the reader is referred to, e.g., \cite{GeM,GMP1}. This concludes the proof of Proposition \ref{assm}. \end{proof}

\section{Proof of Theorem \ref{thm:1}}\label{sec.5}

In order to conclude the proof of Theorem \ref{thm:1}, we need to prove that: 
there exists a choice of $m_{\text{R,-}}$ for which the Euclidean correlations of 
the reference model with Hamiltonian $\mathcal H^{\text R}$, see \eqref{cc}, coincide with those of the original Hamiltonian $\mathcal H$; the Euclidean Kubo conductivity 
coincides with the real-time one. Cf. with the last two items, (iii) and (iv), of the list after \eqref{eq.mR-}. We also need to prove the regularity and symmetry properties of the critical curves, stated in Theorem \ref{thm:1}.

\medskip

Let us start with discussing item (iii), as well as the $C^1$ regularity of the critical curves. 
In order to prove the equivalence of $\mathcal H$ and $\mathcal H^{\text R}$, it is enough to fix the counterterms as discussed in the previous section, and choose $m_{\text{R},-}$ to be the solution of 
\eqref{eq.mR-}. 
%
Let us then show that \eqref{eq.mR-} can be inverted in the form $m_{R,-}= m_{R,-}(U, W, \phi)$, with $m_{R,-}(U, W, \phi)$ analytic in $U$ and $C^{1}$ in $W, \phi$. 
We want to appeal to the analytic implicit function theorem. For this purpose, we need to estimate the derivative of $\d(U,m_{\text{R,-}},\phi)$ w.r.t. $m_{\text{R},-}$. 
Recall that $\d_{-,0}=\d_{-,0}(U,m_{\text{R},-},\phi)$ satisfies the second of \eqref{in.dat}, and that $\d(U,m_{\text{R},-},\phi)$ and $\d_{-,0}(U,m_{\text{R},-},\phi)$ are analytically close (they differ only 
because of the effect of the ultraviolet integration). Therefore, 
$$ \d(U,m_{R,-},\phi)=-\sum_{k=h^*_2+1}^1 2^{k-1}\b^\d_{-,k}(U,\underline\t),$$
where $\b^\d_{-,k}(U,\underline\t)$ accounts for the difference between $\d$ and $\d_{-,0}$ due to the ultraviolet integration. Differentiating both sides with respect to the mass, we find: 
$$\frac{\partial \d(U,m_{R,-},\phi)}{\partial m_{R,-}}=-\sum_{k=h^*_2+1}^1 2^{k-1}\frac{\partial \b^\d_{-,k}}{\partial m_{R,-}}(U,\underline\t),$$
which should be looked at as (a component of) a fixed point equation for the derivatives of the running coupling constants, analogous to the ones solved in the proof of Lemma \eqref{lem:count}. 
When acting on the beta function, the derivative with respect to $m_{R,-}$ can act on a propagator $g^{(k')}_\o$, or on a running coupling constant. When acting on a propagator, it replaces $g^{(k')}_\o$
by $\frac{\partial g^{(k')}_\o}{\partial m_{\text{R},-}}$, which is bounded dimensionally in the same way as $g^{(k')}_\o$, times an extra factor proportional to $2^{-k'}$. On the other hand, the action of the derivative on a 
running coupling constant should be bounded inductively, in the same spirit as the proof of Lemma \ref{lem:count}. All in all, recalling also the basic bound on the beta function, \eqref{eq.bound}, we get
\be\label{eq:derm}
\Big|\frac{\partial\d(U,m_{R,-},\phi)}{\partial m_{R,-}}\Big|\le \sum_{k=h^*_2+1}^1 2^{k}C_0|U|2^{\th k}2^{-k}\le C_2|U|,
\ee
for a suitable constant $C_2$. 
Exactly the same argument and estimates are valid for the derivative with respect to $\phi$, so that 
\be\label{eq:derph}
\Big|\frac{\partial \d(U,m_{R,-},\phi)}{\partial \phi}\Big|\le C_2|U|\;.
\ee
The last estimate is optimal for small $\phi$. For larger values of $\phi$, one can also take advantage of the symmetry under exchange $\phi\to\pi-\phi$ (the `magnetic reflections', see \eqref{eq.4.12}) to conclude 
that the derivative of $\d$ with respect to $\phi$ vanishes continuously as $\phi\to(\pi/2)^-$. Moreover, by the symmetry properties of the model, $\d(U,0,0)=0$. Therefore, $|\d(U, m_{\text{R}, -}, \phi)|\leq 2C_2|U| (|m_{\text{R},-}|+
\sin\phi)$. 

Using these bounds and the implicit function theorem, we see that \eqref{eq.mR-} can be inverted in the form \eqref{eq.2.23}, with $|\widetilde\d(U,W,\phi)|\leq C|U| (W+\sin\phi)$ for some constant $C$. 
The equation for the critical curve in the parameter range we are considering is simply $m_{\text{R,-}}=0$, that is $W=3\sqrt3t_2\sin\phi +\d(U,0,\phi)$, which is $C^1$ in $\phi$ and, thanks to the symmetries of the problem,  
it satisfies the properties stated in 
Theorem \ref{thm:1}.

\bigskip

We are left with discussing item (iv), that is, the equivalence between the Euclidean and real-time Kubo conductivities. Given our bounds on the Euclidean correlations, the equivalence follows from result discussed in previous papers. In fact, our bounds imply that the current-current correlations, at large space-time separations, decay either faster-than-any-power decay, if $m_{\text{R},-}\neq0$, or like $|\xx-\yy|^{-4}$, otherwise: therefore, we can repeat step by step the proof of \cite[Theorem 3.1]{GMPhall}, as the reader can easily check. For a slightly modified and simplified proof, see also \cite[Appendix B]{AMP} and \cite[Section 5]{MP2}. 

\medskip

This concludes the proof of Theorem \ref{thm:1}. \qed

\subsection{Concluding remarks}

In conclusion, the universality of the Hall conductivity (i.e., its independence from the interaction strength) can be seen as a consequence of lattice conservation laws, combined with the regularity properties of the correlation 
functions. The quantization of the interacting Hall conductivity then follows from its quantization in the non-interacting case: however, an important point in the proof is to compare the interacting system and its conductivity with the right reference non-interacting system, that is, the one with the right value of the mass; this is the reason why we introduce a reference non-interacting system with mass equal to the renormalized mass of the interacting system;
in order to fix the correct value of the renormalized mass, we need to solve a fixed point equation for it. 
The same strategy we proposed in the present context can be easily extended to prove that the Hall conductivity is constant against {\em any} deformation of the Hamiltonian, provided the Fourier transform of the correlations are sufficiently regular at each point of the interpolation, in the sense specified by Proposition \ref{assm}. Note that our universality result is valid as soon as the Fourier transform of the current-current-interaction correlations are 
$C^3$ in momentum space (a critical analysis of the proof shows that we need even less: $C^{2+\e}$ with $\e>0$ is a sufficient condition for our construction to work). This means that we do not require the existence of a spectral gap, or the exponential decay of correlations: sufficiently fast polynomial decay is actually enough. It would be nice to provide a realistic example of a gapless model exhibiting a non-trivial, universal behavior of the transverse conductivity (or, in alternative, to exclude the possibility that such a model exists). 

\medskip

A problem connected with the one discussed in this paper, but much more challenging, is to prove universality of the conductivity for massless models with slow polynomial decay of correlations: by `slow', here, we mean that 
Proposition \ref{assm} cannot be applied. A first example is the Haldane model, considered in this paper, for values of the parameters {\it on} the critical line. In this case, as already recalled after the statement of Theorem 
\eqref{thm:1}, one can prove the universality of the {\it longitudinal conductivity} \cite{GJMP}: the proof, which generalizes the one in \cite{GMP1}, uses lattice Ward Identities, combined with the symmetry properties of the 
current-current correlation functions. It would be very interesting to establish the universality, or the violation thereof, of the transverse conductivity on the critical line. 

Another gapless context, where the issue of the universality of the conductivity naturally arises, is the case of bulk massive systems in non-trivial domains with, say, Dirichlet conditions imposed at the boundary. 
In such a setting, usually, massless edge states appear, and the edge system is characterized by correlations with slow polynomial decay. 
Nevertheless, universality holds as a consequence of more subtle mechanisms, which relies on the non-renormalization of the edge chiral anomaly. Using these ideas, two of us proved the validity of the bulk-edge correspondence 
in lattice Hall systems with single-mode chiral edge currents \cite{AMP}, and in the spin-conserving Kane-Mele model \cite{MP}. It would be very interesting to generalize these findings to lattice systems with several 
edge modes, as well as to continuum systems. 

%
%

\bigskip

\noindent{{\bf Acknowledgements.}} The work of A. G. and of V. M. has been supported by the European Research Council (ERC) under the European Union's Horizon 2020 research and innovation programme (ERC CoG UniCoSM, grant agreement n.724939). V. M. acknowledges support also from the Gruppo Nazionale di Fisica Matematica (GNFM). The work of M. P. has been supported by the Swiss National Science Foundation, via the grant ``Mathematical Aspects of Many-Body Quantum Systems''.

\end{document}